\title{Sparse Regular Expression Matching} 
\author{Philip Bille \\\texttt{phbi@dtu.dk} \and Inge Li G{\o}rtz \\\texttt{inge@dtu.dk}}
\date{}
\newcommand{\lef}{\ensuremath\mathsf{left}}
\newcommand{\rig}{\ensuremath\mathsf{right}}
\newcommand{\tops}{\ensuremath\mathsf{top}}
\newcommand{\bots}{\ensuremath\mathsf{bot}}
\newcommand{\succu}{\ensuremath\mathsf{succ}}
\newcommand{\predu}{\ensuremath\mathsf{pred}}
\newcommand{\first}{\ensuremath\mathsf{first}}
\newcommand{\firstextent}{\ensuremath\mathsf{firstextent}}
\newcommand{\lab}{\ensuremath\mathsf{label}}
\newcommand{\last}{\ensuremath\mathsf{last}}
\newcommand{\lastextent}{\ensuremath\mathsf{lastextent}}
\newcommand{\depth}{\ensuremath\mathsf{depth}}
\newcommand{\parent}{\ensuremath\mathsf{parent}}
\newcommand{\parentstar}{\ensuremath\mathsf{parent}^\ast}
\newcommand{\parentodot}{\ensuremath\mathsf{parent}^\odot}
\newcommand{\follow}{\ensuremath\mathsf{follow}}
\newcommand{\lca}{\ensuremath\mathsf{lca}}
\newcommand{\firstlabel}{\ensuremath\mathsf{firstlabel}}
\newcommand{\nex}{\ensuremath\mathsf{next}}
\newcommand{\Pos}{\ensuremath\mathsf{Pos}}
\newcommand{\RH}{\ensuremath\mathrm{RH}}
\newcommand{\AT}{\ensuremath A_{\mathsf{T}}}
\newcommand{\APOS}{\ensuremath A_{\mathsf{Pos}}}
\newcommand{\proofsubparagraph}[1]{\medskip
\noindent {\bf \color{darkgray}#1}. \hspace{0.5em}}
\newtheorem{theorem}{Theorem}
\newtheorem{lemma}[theorem]{Lemma}
\newtheorem{observation}[theorem]{Observation}
\newtheorem{claim}[theorem]{Claim}
\newenvironment{claimproof}[1]{\par\noindent\underline{Proof:}\space#1}{\hfill $\blacksquare$}
\begin{document}

\maketitle

\begin{abstract}
A regular expression specifies a set of strings formed by single characters combined with concatenation, union, and Kleene star operators. Given a regular expression $R$ and a string $Q$, the regular expression matching problem is to decide if $Q$ matches any of the strings specified by $R$. Regular expressions are a fundamental concept in formal languages and regular expression matching is a basic primitive for searching and processing data. A standard textbook solution [Thompson, CACM 1968] constructs and simulates a nondeterministic finite automaton, leading to an $O(nm)$ time algorithm, where $n$ is the length of $Q$ and $m$ is the length of $R$. Despite considerable research efforts only polylogarithmic improvements of this bound are known. Recently, conditional lower bounds provided evidence for this lack of progress when Backurs and Indyk [FOCS 2016] proved that, assuming the strong exponential time hypothesis (SETH), regular expression matching cannot be solved in $O((nm)^{1-\epsilon})$, for any constant $\epsilon > 0$. Hence, the complexity of regular expression matching is essentially settled in terms of $n$ and $m$. 

In this paper, we take a new approach and introduce a \emph{density} parameter, $\Delta$, that captures the amount of nondeterminism in the NFA simulation on $Q$. The density is at most $nm+1$ but can be significantly smaller. Our main result is a new algorithm that solves regular expression matching in $$O\left(\Delta \log \log \frac{nm}{\Delta} +n + m\right)$$ time. 

This essentially replaces $nm$ with $\Delta$ in the complexity of regular expression matching. We complement our upper bound by a matching conditional lower bound that proves that we cannot solve regular expression matching in time $O(\Delta^{1-\epsilon})$ for any constant $\epsilon > 0$ assuming SETH. 

The key technical contribution in the result is a new linear space representation of the classic position automaton that supports fast state-set transition computation in near-linear time in the size of the input and output state sets. To achieve this, we develop several new insights and techniques of independent interest, including new structural properties of the parse trees of regular expressions, a decomposition of state-set transitions based on parse trees, and a fast batched predecessor data structure.  
\end{abstract}

\thispagestyle{empty}
\setcounter{page}{0}

\newpage

\section{Introduction}
A regular expression $R$ specifies a set of strings formed by characters from an alphabet $\Sigma$ combined with concatenation ($\odot$), union ($\mid$), and Kleene star ($^\ast$) operators. For instance, $(a|(b\odot a))^\ast$ describes the set of strings of $a$s and $b$s such that every $b$ is followed by an $a$. Given a regular expression $R$ and string $Q$, the regular expression matching is to decide if $Q$ matches any of the strings specified by $R$. Regular expressions are a fundamental concept in formal language theory introduced by Kleene in the 1950'ties~\cite{Kleene1956} and regular expression matching is a basic tool in computer science for searching and processing text. Standard tools such as \texttt{grep} and \texttt{sed} provide direct support for regular expression matching in files, and the scripting language \texttt{perl}~\cite{Wall1994} is a full programming language designed to support regular expression matching easily. Regular expression matching appears in many large-scale data processing applications such as internet traffic analysis~\cite{JMR2007, YCDLK2006, KDYCT2006}, data mining~\cite{GRS1999}, data bases~\cite{LM2001, Murata2001}, computational biology~\cite{NR2003}, and human-computer interaction~\cite{KHDA2012}.

A classic textbook algorithm for regular expression matching, due to Thompson~\cite{Thomp1968} from 1968, constructs and simulates a nondeterministic finite automaton (NFA) $A$ in $O(nm)$ time, where $n$ is the length of $Q$ and $m$ is the number of character symbols in $R$. The simulation processes $Q$ from left to right and computes a sequence of sets of states $S_0, \ldots, S_n$ such that $S_i$ is the set of states in $A$ to which there is a path from the initial state that matches $Q[1..i]$. In 1985 Galil~\cite{Gal85} asked if a faster algorithm could be obtained. A sequence of results~\cite{Myers1992, BFC2008, Bille06, BT2009} improved the $O(nm)$ bound using tabulation or word-level parallelism leading to solutions using either $O(nm\frac{\log \log n}{\log^{1.5} n} + n + m)$~\cite{BT2009} or $O(nm\frac{\log w}{w} + n + m\log m)$ time~\cite{Bille06} time, where $w$ is the word length. Finally, Bille and Thorup~\cite{BT2010} gave an algorithm using $O(nk\frac{\log w}{w}+n+m\log k)$ time, where $k \leq m$ is the number of \emph{strings} appearing in the regular expression (see also \cite{DGGS2022, BM2011}).

The above solutions are based on the classic NFA simulation algorithm from Thompson's original algorithm~\cite{Thomp1968} and thus achieve the same $O(nm)$ time with some polylogarithmic factors shaved. Recently, conditional lower bounds have provided evidence for the lack of more significant progress. First, Backurs and Indyk~\cite{BI2016} showed in 2016 that we cannot solve regular expression matching in time $O((nm)^{1-\epsilon})$, for $\epsilon > 0$, assuming the strong exponential time hypothesis (SETH)~\cite{IR2001}. Then, in 2018 Abboud and Bringmann~\cite{AB2018} showed that we cannot solve the problem in time $O(nm/\log^{7+\epsilon} n)$, for $\epsilon > 0$, assuming the Formula SAT hypothesis~\cite{AB2018}. These results, together with Bringmann, Larsen, and Gr{\o}nlund~\cite{BGL2017} and Schepper~\cite{Schepper2020}, also studied subclasses of regular expression matching depending on the  structure of the operators in the expression, leading to a classification of the complexity of each such subclass. In summary, the complexity of regular expression matching is essentially settled in terms of $n$ and $m$. 

In this paper, we take a new approach and  introduce a \emph{density} parameter, $\Delta$, that captures the amount of nondeterminism in the NFA simulation on $Q$. The density is at most $nm+1$ but can be significantly smaller. Our main result is a new algorithm that solves regular expression matching in $$O\left(\Delta \log \log \frac{nm}{\Delta} +n + m\right)$$ time. This essentially replaces $nm$ with $\Delta$ in the complexity of regular expression matching. We complement our upper bound by a conditional lower bound that proves that we cannot solve regular expression matching in time $O(\Delta^{1-\epsilon})$ for any constant $\epsilon > 0$ assuming SETH.


\subsection{Sparse Regular Expression Matching}
Recall that the NFA simulation algorithm constructs a sequence $S_0, \ldots, S_n$ of state sets such that $S_i$ is the set of states in the automaton to which there is a path from the initial state that matches $Q[1..i]$. The goal of this paper is to explore the complexity of regular expression matching if these sets are \emph{sparse}. More precisely, let $(R, Q)$ be an instance of the regular expression matching and let $A$ be a finite automaton that accepts the set of strings defined by $R$, and let $S_0, \ldots, S_n$ be the sequence of sets of states in the simulation of $A$ on $Q$. We define the \emph{density} of $(R, Q)$ wrt. $A$ to be
$$
\Delta^A_{R,Q} = \sum_{i=0}^n |S_i|, 
$$
i.e., the density is the total size of the state sets in the simulation of $A$ on $Q$.  
\begin{figure}[t]
\centering  
\includegraphics[scale = 0.47]{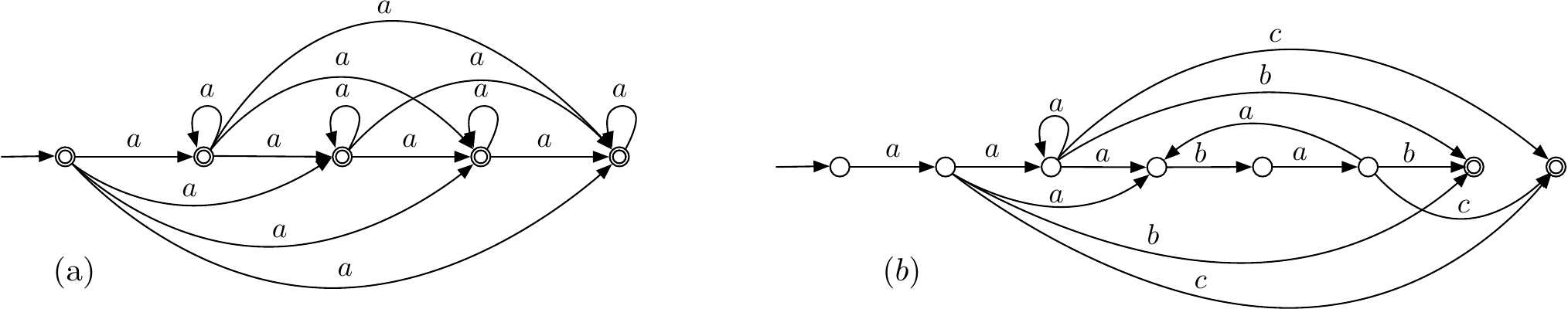}
  \caption{(a) The position automaton for the regular expression $a^\ast a^\ast a^\ast a^\ast$. (b) The position automaton for $a (a^\ast) (aba)^\ast (b|c)$.}
  \label{fig:automaton}	
\end{figure}
We will focus on density wrt. to the classic \emph{position automaton} (also known as \emph{Glushkov's automaton}), denoted $\APOS$,  proposed by Glushkov in 1960~\cite{Glushkov1960,Glushkov1961} and independently by McNaughton and Yamada~\cite{MY1960}. For most NFA constructions~\cite{GPW1998,AM2006,BHMMR2019,Thomp1968} (including several textbook constructions~\cite{AU1973,ASU1986,DDQ1978,LP1981,Martin1991, SS1988,Wood1987}), the density wrt. to $\APOS$ is a lower bound on the density wrt. to the other NFA construction. The key observation is that the set of states in $\APOS$ naturally corresponds to a \emph{subset} of the set of states in the other constructions. For instance, we can convert Thompson's NFA, $\AT$, into the corresponding position automaton, $\APOS$, by carefully contracting $\epsilon$-transitions~\cite{GPW1998,AM2006,BHMMR2019}. This implies $\Delta^{\APOS}_{R,Q} \leq \Delta^{\AT}_{R,Q}$ and thus if an algorithm is efficient in terms of $\Delta^{\APOS}_{R,Q}$ the same algorithm is also efficient in terms of $\Delta^{\AT}_{R,Q}$.  
Hence, for the rest of the paper, we define the density, denoted $\Delta_{R,Q}$, to be $\Delta^{\APOS}_{R,Q}$, and when $R$ and $Q$ are clear from the context we simply write $\Delta$. 

Intuitively, the density captures the amount of nondeterminism in the simulation of $\APOS$. At one extreme $\Delta = n+1$ when all of the $n+1$ state sets are singletons (assuming $R$ matches $Q$) and at the other extreme $\Delta = nm+1$ when all of the state sets, except the special $S_0$, consists of all states. The density can be significantly smaller than $nm$ in important practical scenarios. For instance, in internet traffic analysis a stream is matched against a large set of rules specified as a regular expression. Typically, most of these packets will only match a small subset of the rules implying a small density of the problem instance.

A related concept is \emph{deterministic regular expressions} (also known as \emph{$1$-unambiguous regular expressions)}. These are defined as regular expressions for which $\APOS$ is deterministic, that is, all state-set transitions on any singleton state set result in a singleton state set. Deterministic regular expressions are widely used in schema languages~\cite{BW1998, BPSMY2008, GST2012, MLMK2005} and have been extensively studied in complexity and automata theory~\cite{BW1998, BruggemannKlein1993, GM2017, PZC1996, MNS2010, CDLM2018, LBC2015}. Groz and Maneth~\cite{GM2017} showed how to solve the deterministic regular expression matching problem in $O(n\log \log m  + m)$ time. Note that if the regular expression is deterministic we always have that $\Delta \leq n+1$. 

\subsection{Sparse State-Set Transitions}
Given a set of $S$ of states and a character $\alpha$ a \emph{state-set transition}, denoted $\delta(S, \alpha)$, is the set of states reachable from $S$ via paths of transitions in the NFA that match $\alpha$ (for $\epsilon$-free NFAs the paths are always single transitions). We can implement the NFA simulation using $n$ state-set transitions by setting $S_0$ to be the initial state, and computing $S_{i} = \delta(S_{i-1}, Q[i])$ for $i = 1, \ldots, n$. In our scenario, we are interested in a compact representation of $\APOS$ that supports fast \emph{sparse state-set transitions}, i.e., a state-set computation that is efficient in terms of the sizes of the input set $|S|$ and the output set $|\delta(S,\alpha)|$. Since $\Delta$ is the total size of state sets in the simulation this implies an efficient algorithm for sparse regular expression matching.

Surprisingly, few results are known for this problem. If we store $\APOS$ explicitly we can compute $\delta(S, \alpha)$ by computing the union of the endpoints of transitions out of states in $S$ labeled $\alpha$. This leads to a data structure that uses $O(m^2)$ space and supports state-set transitions in $O(|S| |\delta(S, \alpha)|)$ time. Note that since endpoints of the transition may overlap (see Figure~\ref{fig:automaton}(a)) we may need to explore $\Omega(|S| |\delta(S, \alpha)|)$ transitions in general. A similar worst-case trade-off also holds for the many variants of the position automaton, see e.g.~\cite{Brzozowski1964, IY2003, Antimirov1996, Mirkin1966}. While $\epsilon$-free NFAs with fewer transitions are known~\cite{HSW2001,HMP2001, Geffert2003, Schnitger2006} these do not appear to translate to simulations for $\APOS$ nor do they improve the above time bound. 

Alternatively, we can store Thompson's automaton, $\AT$, and use the mapping of states mentioned above to convert state-set transitions on $\AT$ to state-set transitions on $\APOS$. Since $\AT$ is not an $\epsilon$-free automaton we can compute a state-set transition using a breadth-first search to explore all paths from $S$ that match $\alpha$. This uses $O(m)$ space and $O(m)$ time. However, it is easy to see that with this approach we may need to traverse large subgraphs of $\Omega(m)$ transitions labeled $\epsilon$ even if the sets $|S|$ or $\delta(S,\alpha)$ are sparse. Indeed, the efficient solutions in terms of $n$ and $m$ are based on improving state-set transitions in $\AT$ for the \emph{dense} case by polylogarithmic factors.

\subsection{Results}
Our main result is an efficient algorithm for sparse regular expression matching.  
\begin{theorem}\label{thm:regexmatching}
 	Given a regular expression $R$ with $m$ positions and a string $Q$ of length $n$, we can solve the regular expression matching problem in space $O(m)$  and time $$O\left(\Delta \log \log \frac{nm}{\Delta} + n + m\right).$$ 
\end{theorem}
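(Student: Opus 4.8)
The plan is to simulate the position automaton $\APOS$ on $Q$ state-set by state-set, but to do so in a way whose running time is governed by the number of \emph{active} states rather than the worst-case $O(nm)$. The simulation maintains, after reading the prefix $Q[1..i]$, the set $S_i$ of states of $\APOS$ reachable by a path spelling $Q[1..i]$; we accept iff $S_n$ contains an accepting position. The classical bottleneck is computing the transition $S_{i+1} = \{\, q' : q \in S_i,\ q \xrightarrow{Q[i+1]} q' \,\}$, which naively takes time proportional to the number of edges leaving $S_i$, i.e.\ up to $\Theta(m)$ per step even when $|S_i|$ is tiny. The key component, which I would invoke from the earlier part of the paper, is the promised linear-space representation of $\APOS$ that supports computing this state-set transition in time near-linear in $|S_i| + |S_{i+1}|$ — more precisely in $O\!\left((|S_i| + |S_{i+1}|)\log\log\frac{nm}{\Delta} + \text{(amortized overhead)}\right)$, exploiting the structured ``interval''/``follow-set'' description of the $\follow$ relation of the position automaton together with predecessor search over the positions sharing the character $Q[i+1]$.

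Next I would set up the accounting. Summing $|S_i|$ over $i = 0,\dots,n$ gives exactly $\Delta$ by the definition of density, since $\Delta$ counts the total number of active states across the whole simulation. Hence the total cost of all $n$ transition computations telescopes: $\sum_{i=0}^{n-1} O\!\left((|S_i| + |S_{i+1}|)\log\log\frac{nm}{\Delta}\right) = O\!\left(\Delta \log\log\frac{nm}{\Delta}\right)$, absorbing the factor-$2$ from double-counting each $S_i$. To this we add the one-time $O(m)$ cost to parse $R$ and build the linear-space data structure (the preprocessing step), the $O(n)$ cost to scan $Q$ and, for each step, locate the block of positions labeled by $Q[i+1]$, and the $O(\Delta)$-or-less cost of the bookkeeping that maintains the active set between steps (marking/unmarking, deduplication). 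This yields the claimed $O\!\left(\Delta\log\log\frac{nm}{\Delta} + n + m\right)$ time, and the space is $O(m)$ because the data structure is linear and at any moment we only store $S_i$ and $S_{i+1}$, each of size $O(m)$.

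One subtlety I would be careful about: the bound $\log\log\frac{nm}{\Delta}$ must behave well as $\Delta$ ranges over its whole domain. When $\Delta = \Theta(nm)$ the term is $O(1)$ and we recover the $O(nm + n + m) = O(nm)$ worst case, consistent with Thompson; when $\Delta$ is small the argument of the outer $\log\log$ is large but the multiplicative factor $\Delta$ is correspondingly small, and one checks that $x \log\log\frac{nm}{x}$ is (sub)additive enough that the per-step contributions sum as claimed rather than incurring a $\sum_i |S_i| \log\log\frac{nm}{|S_i|}$ blow-up — here concavity of $x\log\log(c/x)$ and Jensen, or a direct convexity argument, does the job, since the worst distribution of a fixed total $\Delta$ over $n$ steps is the uniform one and $n \cdot \frac{\Delta}{n}\log\log\frac{nm n}{\Delta} = \Delta(\log\log\frac{nm}{\Delta} + \log\log n)$, which is still $O(\Delta\log\log\frac{nm}{\Delta})$ after checking the regime $n \le nm/\Delta$ (equivalently $\Delta \le m$) versus the complementary one.

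The main obstacle is not this accounting but the correctness and efficiency of the state-set transition primitive itself — showing that the $\follow$ relation of $\APOS$ decomposes into few enough structured pieces (intervals, or a tree-based representation with $\lca$/$\rmq$ queries, as the macro list in the preamble hints) that, combined with a $y$-fast-trie-style predecessor structure giving the $\log\log$ factor, one step can be executed in output-sensitive time. I would treat that as the lemma established earlier in the paper and, for Theorem~\ref{thm:regexmatching}, simply compose it with the simulation loop and the density accounting above. The only remaining check is that testing acceptance and, if desired, reporting a witnessing match adds no more than $O(m)$ at the end, which is immediate.
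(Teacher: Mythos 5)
Your proposal is correct and follows essentially the same route as the paper: the paper's proof of Theorem~\ref{thm:regexmatching} is exactly the composition of the state-set transition primitive (Theorem~\ref{thm:finiteautomaton}, whose per-step cost is $O\bigl(|S_i|\log\log\frac{m}{|S_i|}+|S_{i+1}|\bigr)$) with the simulation loop, followed by the same concavity/Jensen step showing the sum is maximized at the uniform distribution, giving $O\bigl(\Delta\log\log\frac{nm}{\Delta}\bigr)$. The only blemish is your side identity $\log\log\frac{nmn}{\Delta}=\log\log\frac{nm}{\Delta}+\log\log n$, which is false as written but harmless here, since the actual primitive has $m$ rather than $nm$ in the numerator and the extra factor never arises.
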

Since the density $\Delta$ is at most $nm+1$, this essentially replaces $nm$ with $\Delta$ in the complexity of regular expression matching. As an immediate Corollary of Theorem~\ref{thm:regexmatching} we obtain a solution to deterministic regular expression matching using $O(n\log \log m + m)$ time and $O(m)$ space, thus matching the best known bound of Groz and Maneth~\cite{GM2017}. We complement Theorem~\ref{thm:regexmatching} with an essentially matching conditional lower bound. 
\begin{theorem}\label{thm:lowerbound}
For any $\Delta = n^{1+\gamma}$, for any constant $0 < \gamma \leq 1$, there exists no $O(\Delta^{1-\epsilon})$ time algorithm for regular expression matching for any constant $\epsilon>0$ assuming SETH. 
\end{theorem}

Theorem~\ref{thm:regexmatching} is based on a compact representation of the position automaton that supports efficient sparse state-set transitions. 
\begin{theorem}\label{thm:finiteautomaton}
	Given a regular expression $R$ with $m$ positions, we can represent the position automaton in $O(m)$ space and preprocessing time, such that given any set of states $S$ in sorted order and a character $\alpha$, we can compute the state-set transition $\delta(S,\alpha)$ in time $$O\left(|S|\log \log \frac{m}{|S|} + |\delta(S,\alpha)|\right).$$
	The output of the state-set transition is also reported in sorted order. 
\end{theorem}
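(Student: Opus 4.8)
The plan is to exploit the well-known structural decomposition of the position automaton via the parse tree of $R$. Recall that in $\APOS$, the transitions out of a position $i$ are determined by the $\follow$ set of $i$, and — crucially — $\follow(i)$ together with the "which positions start this subexpression" ($\first$) and "which positions end it" ($\last$) sets are all described by contiguous ranges of positions once we number the positions by their left-to-right (in-order) appearance in $R$. More precisely, for every subexpression $S'$ of $R$, the set $\first(S')$ is an interval-like structure and $\last(S')$ likewise; the transitions $\delta(S,\alpha)$ are exactly the union over $i\in S$ of $\follow(i)$, restricted to positions labeled $\alpha$. So I would first build, in $O(m)$ time and space, the parse tree of $R$ augmented with: the in-order position numbering; for each node the $\first$ and $\last$ extents ($\firstextent$, $\lastextent$); pointers $\parentstar$ skipping to the nearest relevant ancestor (e.g. the nearest Kleene-star or concatenation ancestor); and, for each alphabet character $\alpha$ actually occurring, the sorted list $\Pos(\alpha)$ of positions labeled $\alpha$, stored so that predecessor/successor queries inside it cost $O(\log\log m)$ via a van~Emde~Boas-type or y-fast-trie structure on a universe of size $m$.

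Next, given $S$ in sorted order and $\alpha$, I would compute $\delta(S,\alpha)$ as follows. The target set $\delta(S,\alpha)$ is a union of $O(|S|)$ "blocks", where each $i\in S$ contributes a bounded number of intervals of positions (coming from the $\follow$ structure of $i$: the positions immediately following $i$ within its innermost concatenation, plus, if $i$ can end an enclosing starred subexpression, the $\first$ set of that subexpression, plus the positions following that subexpression, and so on up the tree — but the standard analysis collapses this to $O(1)$ canonical intervals per position by precomputing $\parentstar$ and the relevant $\first$/$\follow$ extents). Then $\delta(S,\alpha)$ is the set of $\alpha$-labeled positions lying in the union of these $O(|S|)$ intervals. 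To list them without spending $O(m)$, I intersect each interval $[\ell,r]$ with $\Pos(\alpha)$ by a predecessor query for $\ell$ followed by walking forward in $\Pos(\alpha)$ until exceeding $r$: this costs $O(\log\log m)$ plus $O(1)$ per position actually output. Finally I merge and deduplicate the $O(|S|)$ sorted pieces; since each piece is already sorted and the total output size is $|\delta(S,\alpha)|$, a standard merge costs $O(|S| + |\delta(S,\alpha)|)$ — or, to get the sharper $|S|\log\log\frac{m}{|S|}$ term in the merge/search part rather than $|S|\log\log m$, I would batch the $O(|S|)$ predecessor queries into $\Pos(\alpha)$ using a single pass that exploits their sortedness, so that the $j$-th query starts where the $(j-1)$-st left off and the amortized cost per query is $O\bigl(\log\log\frac{m}{|S|}\bigr)$ by the standard convexity bound on sorted batched predecessor queries (fingers / fractional-cascading-style argument, or simply the fact that $\sum\log\log(u_j)$ with $\sum u_j \le m$ and $|S|$ terms is maximized when the gaps are equal).

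The main obstacle, and the part that needs the most care, is showing that the $\follow$ relation really does decompose into only $O(1)$ position-intervals per source position $i$ under a single global numbering — i.e. that we can charge the "walk up the parse tree through nested stars and concatenations" to $O(1)$ precomputed quantities rather than $O(\depth)$ work. This is where the $\firstextent$, $\lastextent$, $\parentstar$, $\elast$, $\efirst$ machinery earns its keep: I expect to prove a lemma of the form "$\follow(i) = A_i \cup B_i$" where $A_i$ is one explicit interval (positions following $i$ at its own concatenation level) and $B_i$ is the $\first$-extent of the topmost starred ancestor that $i$ can close, unioned with what follows that ancestor — again an interval — with all the relevant endpoints read off in $O(1)$ from the preprocessed tables. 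Once that lemma is in hand the rest is the routine interval-intersection-with-a-sorted-set argument above; the batched-predecessor bound giving exactly $O\bigl(|S|\log\log\frac{m}{|S|}\bigr)$ is the only other place where one must be slightly careful rather than loose.
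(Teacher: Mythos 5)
There is a genuine gap, and it sits exactly where you flagged "the main obstacle": the lemma you hope to prove --- that $\follow(i)$ decomposes into $O(1)$ intervals of positions under the global left-to-right numbering --- is false. Take $R = \bigl((a_1b_1)^\ast(a_2b_2)^\ast\cdots(a_kb_k)^\ast\bigr)^\ast$. Then $b_1$ lies in $\last$ of every enclosing subexpression, so $\follow(b_1)$ contains $\{a_1,a_2,\ldots,a_k\}$, which is $\Theta(m)$ maximal intervals in the position numbering. More basically, $\first$ sets are not intervals at all (any $\mid$-node breaks contiguity: $\first(a\,b \mid a\,d)=\{1,3\}$), and $\first(v)\cap \Pos_\alpha$ is not even contiguous inside the sorted list $\Pos_\alpha$. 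What \emph{is} contiguous is the set of $\alpha$-labeled \emph{descendants} of a node $v$; the paper exploits this by taking that range in $A_\alpha$ and then filtering it down to $\first(v)\cap\Pos_\alpha$ with a recursive range-minimum-query on the depths of highest first extents, which reports exactly the matching positions in time $O(1)$ per output. Your interval-stabbing plan has no analogue of this filtering step and cannot be repaired by a constant number of precomputed endpoints per source position.

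The second gap is the deduplication cost. Even if each $i\in S$ contributed a constant number of sorted pieces, those pieces may overlap heavily across different $i$, so "a standard merge" costs $O(|S| + \sum_i |\text{piece}_i|)$, and the sum can be $\Theta(|S|\cdot|\delta(S,\alpha)|)$ --- this is precisely the rectangular bound the theorem is trying to beat, and it is the central difficulty of the whole construction. The paper's resolution is Lemma~\ref{lem:transitiondecomposition}: it identifies a set of \emph{relevant} transition nodes (pruning any node whose contribution is contained in an ancestor's) whose internal transitions are pairwise disjoint and non-empty, so that every reported position is charged exactly once; finding those relevant nodes requires the transition-tree machinery of Section~\ref{sec:computingsparsetransitions}, not a merge. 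Your batched-predecessor step is the one ingredient that does survive in spirit (the paper proves a general reduction achieving $O(|P|\,t(n,u/|P|))$), though the "finger search into a y-fast trie" shortcut you sketch is not a known property of those structures and would also need the paper's trie-of-subuniverses argument.
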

The sorted order of $S$ and $\delta(S,\alpha)$ in  Theorem~\ref{thm:finiteautomaton} refers to the ordering of the corresponding positions in $R$ from left to right (without this condition the $\log \log (m/|S|)$ factor becomes $\log \log |S|)$). Theorem~\ref{thm:finiteautomaton} significantly improves the previous $O(|S| |\delta(S, \alpha)|)$ and $O(m)$ time bounds. Since any solution must use at least $\Omega(|S| + |\delta(S, \alpha)|)$ to read the input and write the output the bound is almost optimal.

\subsection{Techniques}

We develop several new insights and techniques of independent interest, including new structural properties of the parse trees of regular expressions, a novel decomposition of state-set transitions based on parse trees, and a fast batched predecessor data structure. 

We show how to decompose any state-set transition $\delta(S,\alpha)$  into a set of \emph{internal transitions} on a set of $O(|S| + |\delta(S,\alpha)|)$ \emph{transition nodes} of the parse tree of $R$. We have two types of internal transitions: one for $\odot$ and one for $\ast$. Intuitively, if $R(v) = R(u) \odot R(w)$ then the internal $\odot$-transition of $v$ wrt.\ $\alpha$ are all the states/positions in $R(w)$ reachable from a state/position in $R(u)$ using a transition labeled $\alpha$. The internal $\ast$-transitions are more complicated to describe, but both types of internal transitions are independent of the state set $S$. We show how to represent $R$ in linear space to efficiently compute internal transitions for any node $v$ and character~$\alpha$.

We identify the set of transition nodes for $\delta(S,\alpha)$ by first computing a compact representation of a \emph{transition tree}, which encodes all paths in $R$ containing transition nodes in $O(|S|)$ space. Then, we find the set of transition nodes using this tree. The key challenge is that even though the  representation of the transition tree is small the tree itself can be significantly larger and contain many nodes  that are irrelevant for the character $\alpha$ and/or irrelevant for the  state set $S$. Using the structural properties of the parse tree we show how to overcome the challenges and efficiently find the set of transition nodes for any $S$ and $\alpha$. Computing the internal transitions of all transition nodes could take too long, as the output of these overlap and we could end up using $\Omega(|S||\delta(S,\alpha)|)$ time. However, we prove that these output sets form a laminar family and show how to divide the computations of the internal transitions into computations on a bounded number of non-overlapping intervals. 

In combination, the above techniques lead to an $O(m)$ space representation of $R$ that supports state-set transitions in $O(|S|\log \log m + |\delta(S,\alpha)|)$ time. The bottleneck here is computing a $O(|S|)$ predecessor queries in $O(\log \log m)$ time. We present a simple two-level data structure that solves this batched predecessor problem in $O(|S|\log \log \frac{m}{|S|})$ time while maintaining linear space leading to our final structure. Using this solution for sparse-set transitions to implement the NFA simulation implies our main result for sparse regular expression matching of Theorem~\ref{thm:regexmatching}.

The lower bound follows from a reduction from the orthogonal vectors problem (OVP). We prove that given  $\Delta =n^{1 + \gamma}$, for any constant $0 < \gamma \leq 1$,  we can construct an instance of regular expression matching such that the existence an $O(\Delta^{1-\epsilon})$ algorithm for regular expression matching violates SETH.
The reduction is based on the reduction by Backurs and Indyk~\cite{BI2016} and is a fairly straightforward generalization of their lower bound.

\subsection{Related Work}
Another NFA construction, by Chang and Paige~\cite{CP1997}, considered compact representations of $\APOS$ that support efficiently implementing NFA to DFA conversion by subset construction. They presented a linear space representation that supports efficiently computing the set of states $S'$ reachable via \emph{any} character from a state-set $S$ in time $O(|S| + |S'|)$. Since $S'$ can be much larger than $\delta(S, \alpha)$ this does not imply an efficient sparse state-set transition.  

Some measures of nondeterminism of NFAs have been studied in automata theory, e.g., width, ambiguity, string tree width, string path width, and cycle height~\cite{KS2022, KM2019, LR2013, HS2002, KS2019}. These focus on the complexity of computing measures of the nondeterminism of a given NFA. In contrast, we study the complexity of regular expression matching in terms of the nondeterminism of a simulation on a given NFA and input string. 

As mentioned, Bille and Thorup~\cite{BT2010} considered the number of strings $k \leq m$ in the regular expression as a parameter for regular expression matching. They gave an algorithm using $O(nk\frac{\log w}{w}+n+m\log k) = O(nk + m\log k)$ time. It is straightforward to construct instances of regular expression matching (for a matching regular expression) such that either $nk = \Theta(nm)$ and $\Delta = \Theta(n)$ or $nk = \Theta(n)$ and $\Delta = \Theta(nm)$ hence this result is incomparable to ours. Cotumaccio, D’Agostino, Policriti, and Prezza~\cite{CDPP23} studied the indexing version of regular expressions, where the goal is to preprocess a regular expression in order to allow for fast matching given a query string. They considered the co-lexicographic width of an automaton. Applying their construction in the matching setting gives an algorithm that runs in $O(m^2 + np^2\log(p\cdot \sigma))$, where $p$ is the width of the co-lexicographic order and $m^2$ comes from the preprocessing of the automaton.

Several papers have studied the related problem of string matching in labeled graphs. For example, Rizzo, Tomescu, and Policriti~\cite{RTP22} studied the problem of matching a pattern on a labeled graph parameterized wrt the size of the labeled direct product graph, and Nellore, Nguyen, and Thompson~\cite{NNT21}  studied string matching in graphs parameterized by the size of the powerset automaton.

Finally, we note that sparsity is a well-studied phenomenon in a wide range of areas in computer science. In particular, sparsity has been extensively studied for other classic pattern matching problems, see, e.g.,~\cite{HS1977,EGGI1992a,EGGI1992, WL1983, Wilbur1984, AG1987}. 

\subsection{Outline}

We review regular expressions and automata in Section~\ref{sec:reg-and-automata} and the parse tree view of regular expressions in Section~\ref{sec:regextrees}. We introduce internal transitions, state-set decompositions, and transition trees in Section~\ref{sec:transition-nodes} and present our main algorithm for sparse state-set transitions in Section~\ref{sec:computingsparsetransitions}. In Section~\ref{sec:speedup}, we present the improved batched predecessor data structure. We use this to obtain the final result for sparse state-set transitions of Theorem~\ref{thm:finiteautomaton} which we then use to obtain Theorem~\ref{thm:regexmatching}. Finally, we show the lower bound of Theorem~\ref{thm:lowerbound} in Section~\ref{sec:lowerbound}.  

\section{Regular Expressions and Automata}\label{sec:reg-and-automata}
We briefly review the classical concepts used in the paper. For more details, see, e.g., Aho et al.~\cite{ASU1986}. 

\paragraph{Regular Expressions}
We consider the set of non-empty 
regular expressions over an alphabet $\Sigma$, defined recursively as
follows. If $\alpha \in \Sigma \cup \{\epsilon\}$ then $\alpha$ is a regular expression, and if $S$ and $T$ are regular expressions then so is the \emph{concatenation}, $(S)\odot(T)$, the \emph{union}, $(S)|(T)$, and the \emph{star}, $(S)^*$. We often omit the concatenation $\odot$ when writing regular expressions. The \emph{language} $L(R)$ generated by a regular expression $R$ is defined as follows. If $\alpha \in \Sigma \cup \{\epsilon\}$, then $L(\alpha)$ is the set containing the single string $\alpha$.  If $S$ and $T$ are regular expressions, then $L(S \odot T) = L(S)\odot L(T)$, that is, any string formed by the concatenation of a string in $L(S)$ with a string in $L(T)$, $L(S)|L(T) = L(S) \cup L(T)$, and $L(S^*) = \bigcup_{i \geq 0} L(S)^i$, where $L(S)^0 = \{\epsilon\}$ and $L(S)^i = L(S)^{i-1} \odot L(S)$, for $i > 0$.

\paragraph{Finite Automata}
A \emph{finite automaton} is a tuple $A = (V, E, \Sigma, \Theta, \Phi)$, where $V$ is a set of nodes called \emph{states}, $E \subseteq (V \times V \times \Sigma \cup \{\epsilon\})$ is a set of directed edges between states called \emph{transitions} each labeled by a character from $\Sigma \cup \{\epsilon\}$, $\Theta \subseteq V$ is a set of \emph{start states}, and $\Phi \subseteq V$ is a set \emph{accepting states}. In short, $A$ is an edge-labeled directed graph with designated subsets of start and accepting nodes. $A$ is a \emph{deterministic finite automaton} (DFA) if $A$ does not contain any $\epsilon$-transitions,  all outgoing transitions of any state have different labels, and there is exactly one start state. Otherwise, $A$ is a \emph{nondeterministic finite automaton} (NFA).  

Given a string $Q$ and a path $p$ in $A$ we say that $p$ and $Q$ match if the concatenation of the labels on the transitions in $p$ is $Q$. Given a state $s$ in $A$ and a character $\alpha$ we define the \emph{state-set transition} $\delta_A(s, \alpha)$ to be the set of states reachable from $s$ through paths matching $\alpha$ (note that the paths may include transitions labeled $\epsilon$). For a set of states $S$ we define $\delta_A(S,\alpha) = \bigcup_{s\in S} \delta_A(s,\alpha)$. We say that $A$ \emph{accepts} a string $Q$ if there is a path from a state in $\Theta$ to a state in $\Phi$ that matches $Q$. Otherwise, $A$ \emph{rejects} $Q$. We can use a sequence of state-set transitions to test acceptance of a string $Q$ of length $n$ by computing a sequence of state-sets $S_0, \ldots, S_n$, given by $S_0 = \delta_A(\Theta, \epsilon)$ and $S_i = \delta_A(S_{i-1}, Q[i])$, $i=1, \ldots, n$. We have that $\Phi \cap S_n \neq \emptyset$ iff $A$ accepts $Q$.

\paragraph{The Position Automaton} 
Given a regular expression $R$, we can construct an NFA accepting precisely the strings in $L(R)$ by several  classic methods~\cite{MY1960, Glushkov1961, Thomp1968}. In particular, Glushkov gave an important construction called the \emph{position automaton} or \emph{Glushkov automaton}. The position automaton is an $\epsilon$-free NFA consisting of only $m+1$ states and $O(m^2)$ transitions (See Figure~\ref{fig:automaton}). Each state except the start state corresponds to a position. Intuitively, each state-set in a state-set simulation is the set of positions in $R$ that correspond to a match of a prefix of $Q$. 

We review the details of the position automaton in the following. Let $R$ be a regular expression with $m$ character symbols from an alphabet $\Sigma$. The \emph{position} of a character in $R$ is the index of the character in the left-to-right order among the characters in $R$. The set of positions in $R$, denoted $\Pos(R)$, is the set $\{1,\ldots, m\}$. The \emph{label} of a position $p$, denoted $\lab(p)$, is the character at position $p$. The subset of positions labeled $\alpha$ is denoted $\Pos_\alpha(R)$. When $R$ is clear from the context we abbreviate $\Pos(R)$ to phb $\Pos$.

The \emph{marked regular expression} of $R$, denoted $\overline{R}$, is obtained from $R$ by subscripting each character in $R$ with its position. Similarly, the \emph{marked alphabet}, denoted $\overline{\Sigma}$, is obtained from $\Sigma$ by adding subscripts. The marked regular expression $\overline{R}$ defines the language $L(\overline{R})$ over the marked alphabet $\overline{\Sigma}$. Note that $\Pos(\overline{R}) = \Pos(R)$.  Given a position $p$ we define $\overline{\lab}(p)$ to be the label of $p$ in $\overline{R}$. The $\first$ and $\last$ set of $R$ represent the positions that match the first and last character, respectively, in some string in $L(\overline{R})$. Given a position $p$, the $\follow$ set of $R$ and $p$ is the set of positions that can follow a position $p$ in $L(\overline{R})$. More precisely,  
\begin{align*}
    \first(R) &= \{p \in \Pos(R) \mid \exists s \in \overline{\Sigma}^*, \overline{\lab}(p) \odot s \in L(\overline{R}))\} \\
    \last(R) &= \{p \in \Pos(R) \mid \exists s \in \overline{\Sigma}^*, s \odot \overline{\lab}(p) \in L(\overline{R}))\} \\
    \follow(R, p) &= \{q \in \Pos(R) \mid \exists s, t \in \overline{\Sigma}^*, s \odot \overline{\lab}(p) \odot \overline{\lab}(q) \odot t \in L(\overline{R}))\}
\end{align*}
We then define the position automaton for $R$ as the NFA $A = (V, E, \{0\}, F)$, where 
\begin{align*}
   V &= \Pos(R) \cup \{0\}, \\
   E &= \{(0, q, \lab(q)) \mid q \in \first(R)\} \; \cup \; \bigcup_{p \in \Pos}\{(p,q, \lab(q)) \mid q \in \follow(R, p)\} \\
   F &= 
   \begin{cases}
   \{0\} \cup \last(R) & \text{if $\epsilon \in L(R)$,} \\
   \last(R) & \text{otherwise}
   \end{cases} 
\end{align*}

\section{Regular Expressions as Trees}\label{sec:regextrees}
Throughout the rest of the paper, let $R$ be a regular expression with $m$ positions and let $\delta$ denote the state-set transition function of the position automaton for $R$. For simplicity in the presentation, we will focus on implementing $\delta$ on the positions of $R$ and ignore the extra start state of the position automaton. The extra start state is straightforward to represent with additional linear space and is only needed in the initial step of state-set simulations. 

\begin{figure}[t]
\centering  
\includegraphics[scale = 0.4]{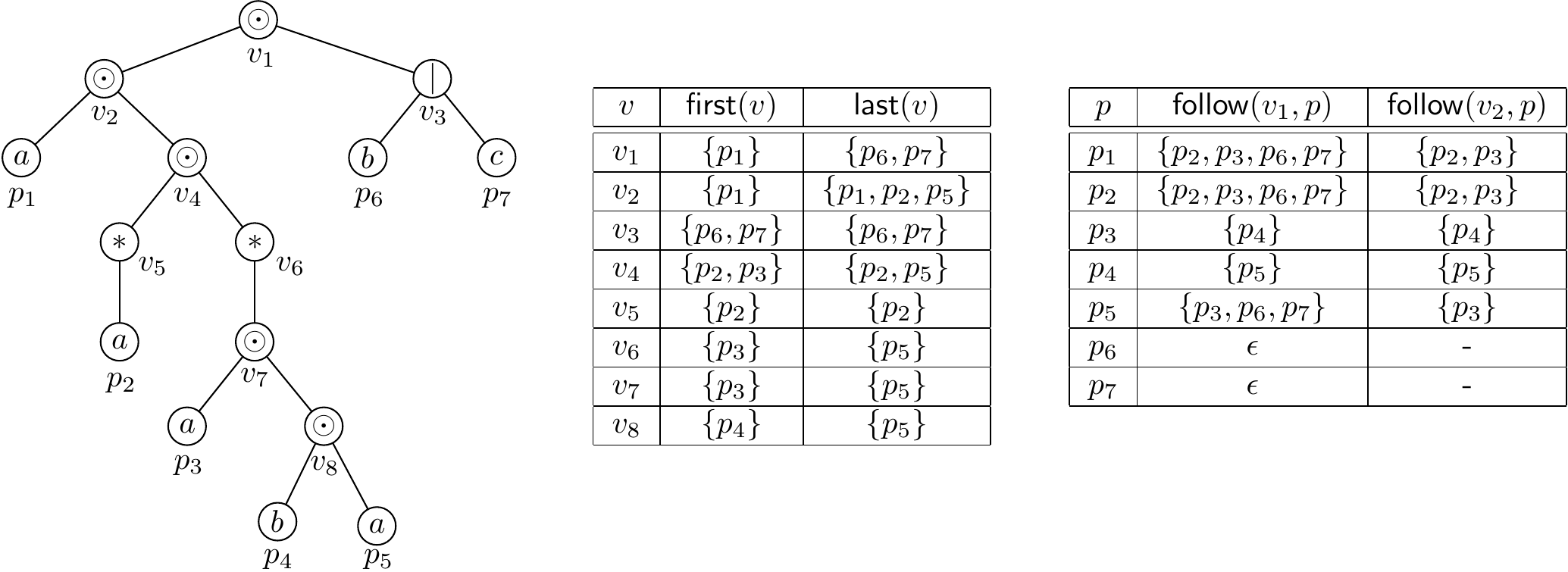}
  \caption{The parse tree for the expression $a (a^\ast)(aba)^\ast(b | c)$, the corresponding $\first$ and $\last$ sets, and the $\follow$ sets for $v_1$ and $v_2$. }
  \label{fig:exampleexpression}	
\end{figure}

We identify regular expressions by their parse trees (see Figure~\ref{fig:exampleexpression}). Note that the leaves in left-to-right order are the positions. We call the three types of internal nodes $\odot$-nodes, $\ast$-nodes, and $\mid$-nodes. For a $|$-node or $\odot$-node $v$ the left and right child are denoted $\lef(v)$ and $\rig(v)$, respectively, and for a $\ast$-node the single child is denoted $\lef(v)$. The \emph{depth} of a node $v$ in $R$ is the number of edges on the path from the root of $R$ to $v$. We denote the subtree (equivalently subexpression) rooted at a node $v$ by $R(v)$. If $u \in R(v)$ then $v$ is an ancestor of $u$, denoted $v \preceq u$, and if $u \in R(v)$ and $u\neq v$ then $v$ is a proper ancestor of $u$, denoted $v \prec u$. If $v$ is a (proper) ancestor of $u$ then $u$ is a (proper) descendant of $v$. A node $w$ is a common ancestor of $u$ and $v$ if it is an ancestor of both $u$ and $v$. The \emph{lowest common ancestor} of $u$ and $v$, $\lca(u,v)$, is the common ancestor of $u$ and $v$ of greatest depth.  The \emph{lowest star ancestor} of a node $v$, denoted $\parentstar(v)$, is the lowest ancestor of $v$ that is a $\ast$-node.

We extend the definition of labels to internal nodes. For each internal node $v$, the label of $v$, denoted $\lab(v)$, is a set of characters such that $\alpha \in \lab(v)$ iff $v = \lca(p,q)$ for some positions $p$ and $q$ both labeled $\alpha$. In Figure~\ref{fig:exampleexpression}, $\lab(v_2) = \{a\}$ since $v_2 = \lca(p_1, p_2)$ and $\lab(p_1)= \lab(p_2) =\{a\}$. Since the total number of internal nodes containing label $\alpha$ is $|\Pos_\alpha| - 1$ the total size of all labels is $O(m)$. For a node $v$ we extend our notation to define $\Pos(v)$, $\first(v)$, $\last(v)$, and $\follow(v,p)$ to denote the sets on the subexpression $R(v)$ (see Figure~\ref{fig:exampleexpression}). 

In our setting, we will often view the $\first$ and $\last$ sets from the perspective of a single position $p$ and consider the nodes for which $p$ appears in the corresponding $\first$ and $\last$ sets, respectively. Specifically, we define the \emph{first extent} and \emph{last extent} of a position $p$, respectively, to be the set of nodes in $R$ given by $\firstextent(p) =\{v \mid p \in \first(v)\}$ and $\lastextent(p) = \{v \mid p \in \last(v)\}$. Furthermore, for a set of positions $P$, we write $\firstextent(P) = \bigcup_{p \in P} \firstextent(p)$ and $\lastextent(P) = \bigcup_{p \in P} \lastextent(p)$. For instance, in Figure~\ref{fig:exampleexpression} we have $\firstextent(\{p_3, p_6\}) = \{p_3, v_7, v_6, v_4, p_6, v_3\}$. We define the first extent and last extent of an internal node $v$, to be the sets  $\firstextent(v) = \{ u \mid u \preceq v \textrm{ and } u \in \firstextent(\Pos(v))\}$ and 
$\lastextent(v) =\{u \mid u \preceq v \textrm{ and } u \in \lastextent(\Pos(v))\}$, respectively. 

The first sets and the last sets, respectively, form a laminar family. That is, for any two nodes in the parse tree, their first sets, respectively, last sets, are either disjoint or one is contained in the other. 
That implies that the set of nodes in $\firstextent(p)$, respectively, $\lastextent(p)$,  forms a path from position $p$ to an ancestor of $p$.

\begin{lemma}\label{lem:pathextent}
Let $p$ be a position in a regular expression $R$ and let $v$ and $u$ be nodes in $R$ such that $u \preceq v \preceq p$. If $u\in\firstextent(p)$, then $v\in\firstextent(p)$ and if $u\in\lastextent(p)$, then $v\in\lastextent(p)$.
\end{lemma}
\begin{proof}
 We have that $R(v)$ is a subexpression of $R(u)$ and $p$ is a position in $R(v)$. Then, if $p \in \first(u)$ then $p \in \first(v)$. Similarly, if $p \in \last(u)$ then $p \in \last(v)$.
\end{proof}

\section{Internal Transitions, State-Set Decompositions, and Transition Trees}\label{sec:transition-nodes}

We now introduce the main structural properties of state-set transitions that we need for our fast sparse state-set transition algorithm in Section~\ref{sec:computingsparsetransitions}. We first characterize state-set transitions in the position automaton in terms of $\firstextent$ and $\lastextent$ using the following important property.  
\begin{lemma}\label{lem:transition}
	Let $p \in \Pos$ and $q \in \Pos_\alpha$ and $v = \lca(p,q)$. Then, $q \in \delta(p, \alpha)$ iff either
	\begin{enumerate}[(i)]
		\item \label{odotcase} $v$ is a $\odot$-node, $\lef(v) \in \lastextent(p)$, and $\rig(v) \in  \firstextent(q)$, or
		\item \label{astcase} $\parentstar(v) \in \lastextent(p) \cap \firstextent(q)$.
	\end{enumerate}
\end{lemma}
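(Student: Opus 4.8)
The plan is to characterize when $q \in \follow^*(p)$ in terms of the parse tree structure, where $\follow^*$ denotes the transitive notion captured by a single $\alpha$-transition in the position automaton — but actually $\delta(p,\alpha)$ is exactly $\{q \in \Pos_\alpha \mid q \in \follow(R,p)\}$, so it suffices to prove the claimed characterization of membership in $\follow(R,p)$. I would argue by induction on the structure of $R$ (equivalently, on the parse tree), tracking how $\follow$, $\first$, and $\last$ sets propagate through $\odot$-, $\mid$-, and $\ast$-nodes. The key recursive facts are the standard Glushkov identities: for a $\odot$-node $v$ with children $L=\lef(v)$, $R'=\rig(v)$ we have $\follow(v,p) = \follow(L,p)$ if $p \in \Pos(L)\setminus\last(L)$, $\follow(v,p) = \follow(L,p)\cup\first(R')$ if $p\in\last(L)$, and $\follow(v,p)=\follow(R',p)$ if $p\in\Pos(R')$; for a $\mid$-node the $\follow$ sets are inherited unchanged from whichever child contains $p$; and for a $\ast$-node $v$ with child $L$, $\follow(v,p) = \follow(L,p)$ if $p\notin\last(L)$ and $\follow(v,p) = \follow(L,p)\cup\first(L)$ if $p\in\last(L)$.

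The main step is to translate these recursive identities into the two non-recursive conditions stated in the lemma, pivoting on $v=\lca(p,q)$. If $q\in\follow(R,p)$, then walking down from the root to $v$, the new follow-pairs $(p,q)$ with $p$ and $q$ in different subtrees can only be introduced at a $\odot$-node (the $\first(R')$ term, requiring $p\in\last(L)$, i.e. $\lef(v)\in\lastextent(p)$, and $q\in\first(R')$, i.e. $\rig(v)\in\firstextent(q)$) or at a $\ast$-node (the $\first(L)$ term, requiring $p\in\last(L)$ and $q\in\first(L)$, i.e. the $\ast$-node lies in both $\lastextent(p)$ and $\firstextent(q)$). Since $\mid$-nodes never create cross-subtree follow-pairs, the lowest node at which $p$ and $q$ get "joined" is exactly their $\lca$ when case (\ref{odotcase}) applies, and is an ancestor $\ast$-node when case (\ref{astcase}) applies — and I need to check that this ancestor $\ast$-node must in fact be $\parentstar(v)$: a $\ast$-node $s$ above $v$ contributes $q\in\first(L(s))$ to $\follow(s,p)$ only when $p\in\last(L(s))$, and by Lemma~\ref{lem:firstinclude} and the path structure of first/last extents, $q\in\first(R(s))$ forces $q\in\first(u)$ for the child $u$ of $s$ on the $v$-side, while $p\in\last(R(s))$ similarly propagates down; combined with $v=\lca(p,q)$ this pins $s$ down to the lowest star ancestor of $v$. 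Conversely, given either condition, I unwind the same identities to exhibit $q\in\follow(R,p)$: in case (\ref{odotcase}), $\lef(v)\in\lastextent(p)$ means $p\in\last(\lef(v))$ and $\rig(v)\in\firstextent(q)$ means $q\in\first(\rig(v))$, so $q\in\follow(v,p)$ directly, and $\follow(v,p)\subseteq\follow(R,p)$ since follow sets only grow going up; case (\ref{astcase}) is analogous using the $\ast$-identity at $\parentstar(v)$.

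The main obstacle I anticipate is the bookkeeping around case (\ref{astcase}): proving that the relevant $\ast$-node is precisely $\parentstar(v)$ and not some higher $\ast$-node, and ruling out that when $v$ is a $\mid$-node or $\ast$-node itself the pair could arise in some other way. This is where Lemma~\ref{lem:firstinclude} does the real work — it lets me "push" the first/last membership from a high ancestor down along the extent paths to the children of $v$, which is what forces the ancestor to be the lowest star ancestor. A secondary subtlety is the case $v=p$ or $v=q$ (i.e. one is an ancestor of the other), which can only happen through $\ast$-nodes since under $\odot$ and $\mid$ a position's follow set within a subtree never contains the position itself unless a star intervenes; this is consistent with case (\ref{astcase}) and I would treat it as a subcase there. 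Everything else is a routine induction once the identities are set up.
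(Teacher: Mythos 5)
The paper does not actually prove this lemma --- it only cites earlier work (Glushkov-style characterizations in \cite{GM2017,CP1997,PZC1996}) --- so your structural induction via the standard $\first$/$\last$/$\follow$ identities is supplying a proof the paper omits, and it is the right one: a pair $(p,q)$ with $q\in\follow(R,p)$ can only be created at a $\odot$-node $w$ (forcing $p\in\Pos(\lef(w))$, $q\in\Pos(\rig(w))$, hence $w=\lca(p,q)$) or at a $\ast$-node $w$ whose child contains both $p$ and $q$ (hence $w$ a proper ancestor of $\lca(p,q)$, with $w\in\lastextent(p)\cap\firstextent(q)$ since $\first$ and $\last$ of a $\ast$-node equal those of its child), and conversely either condition unwinds to $q\in\follow(R,p)$ because $\follow$ sets only grow going up. Two small points to tighten. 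First, your phrasing that the path structure of the extents ``pins $s$ down to the lowest star ancestor'' is not quite what you need: the witnessing $\ast$-node $s$ need \emph{not} be $\parentstar(v)$; rather, since $\lastextent(p)$ and $\firstextent(q)$ are upward paths from $p$ and $q$ and $\parentstar(v)$ lies on both paths between the leaves and $s$, membership of $s$ in both extents implies membership of $\parentstar(v)$ in both --- that is the statement you should prove, and it is exactly what makes condition (ii) an ``iff'' rather than an existential over all star ancestors. Second, the degenerate case is only $p=q$ (positions are leaves, so neither can be a proper ancestor of the other); then $v=p$ is a leaf, case (i) is vacuous, and case (ii) correctly reduces to $\parentstar(p)\in\lastextent(p)\cap\firstextent(p)$, the usual condition for a position to follow itself under a star. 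With those clarifications your argument is complete and is the standard proof of this classical fact.
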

Lemma~\ref{lem:transition} has appeared in various forms in earlier work~\cite{GM2017,CP1997,PZC1996}. For our purposes, we state it in terms of lowest common ancestors and first extents and last extents. Lemma~\ref{lem:transition} states that a position $q$ can only appear in $\delta(p,\alpha)$ through a $\odot$-node or a $\ast$-node. We write $q \in \delta^\odot(p,\alpha)$ if (\ref{odotcase}) is satisfied and $q \in \delta^\ast(p,\alpha)$ if (\ref{astcase}) is satisfied. 

\subsection{Internal Transitions}
Given an internal node $v$, an internal transition on $v$ and a character $\alpha$ will correspond to the conditions on $q$ and $v$ in Lemma~\ref{lem:transition} while ignoring the condition on $p$. In general, we will also specify a range of positions we are interested in. Formally, given an internal node $v$, a character $\alpha$, and positions $l$ and $r$, define the \emph{internal $\odot$-transition} and \emph{internal $\ast$-transition}, denoted $\delta^{\odot}(v, \alpha)$ and $\delta^{\ast}(v, \alpha)$, respectively, as follows.
\[
\begin{aligned}
\delta^{\odot}_{[l,r]}(v, \alpha) &= \{q \in \Pos_\alpha \mid \rig(v) \in \firstextent(q) \text{ and } q \in [l,r]  \} &&\text{if $v$ is a $\odot$-node} \\
\delta^{\ast}_{[l,r]}(v, \alpha) &= \{q \in \Pos_\alpha \cap \Pos(v) \mid \parentstar(v) \in \firstextent(q) \text{ and } q \in [l,r] 
\}  &&\text{if $v$ is a $\ast$-node}
\end{aligned} 
\]
When the range includes all positions we drop the subscript, that is, $\delta^{\odot}(v, \alpha) = \delta^{\odot}_{[1,m]}(v, \alpha)$ and $\delta^{\ast}(v, \alpha) = \delta^{\ast}_{[1,m]}(v, \alpha)$. For instance, in Figure~\ref{fig:exampleexpression} we have  $\delta^{\odot}(v_2, a) = \{p_2, p_3\}$, $\delta^{\odot}_{[3,5]}(v_2, a) = \{p_3\}$, $\delta^{\odot}(v_1, c) = \{p_7\}$ and $\delta^{\ast}(v_7, a) = \{p_3\}$.

\subsection{Transition Nodes}
Given a state-set transition $\delta(P,\alpha)$, the transitions nodes are a  set of nodes $N$ such that if we compute the union of internal transitions on $N$ we obtain $\delta(P, \alpha)$. Formally, we define the \emph{$\odot$-transition nodes} and \emph{$\ast$-transition nodes} of $\delta(P,\alpha)$, denoted $N^\odot(P,\alpha)$ and $N^\ast(P,\alpha)$, respectively, as 
\[
\begin{aligned}
  N^\odot(P,\alpha) &=   \{v \mid \text{$v$ is a $\odot$-node and $\lef(v)\in \lastextent(P)$ and $\rig(v) \in \firstextent(\Pos_\alpha)$} \} \\
  N^\ast(P,\alpha) &= \{v \mid \text{there exists $q \in \Pos_\alpha$ and $p \in P$ such that $v= \lca(p,q)$} \\ 
  &\phantom{=\{v \mid \text{ }} \quad \qquad \qquad \qquad \qquad \qquad \text{  and $\parentstar(v) \in \lastextent(p) \cap \firstextent(q)$}
  \}
\end{aligned}
\]
In combination, the set of \emph{transition nodes} is the union of the $\odot$-transition nodes and the $\ast$-transition nodes. 

\begin{lemma}\label{lem:transitiondecomp}
For any set of positions $P$ and a character $\alpha$,    
\begin{equation}\label{eq:transitiondecompositionx}
	\delta(P,\alpha) = \bigcup_{v\in N^\odot(P,\alpha)} \delta^{\odot}(v, \alpha) \qquad \cup \quad \bigcup_{v\in N^\ast(P,\alpha)} \delta^{\ast}(v, \alpha)\;.
\end{equation}
\end{lemma}
\begin{proof} 
 Let $\RH$ denote the right handside of (\ref{eq:transitiondecompositionx}).

We first show that $\delta(P, \alpha) \subseteq \RH$. Let $p \in P$ and $q \in \Pos$ be positions with $v = \lca(p,q)$ such that $q \in \delta(p,\alpha)$. Then, $q\in \Pos_\alpha$ and hence $p$, $q$, and $v$ satisfies either case (\ref{odotcase}) or (\ref{astcase}) in Lemma~\ref{lem:transition}. If  (\ref{odotcase}) is satisfied, $v$ is a $\odot$-node and $\lef(v) \in \lastextent(p) \subseteq \lastextent(P)$ and $\rig(v) \in  \firstextent(q) \subseteq \lastextent(\Pos_\alpha)$. By definition, $ v \in N^\odot(P,\alpha)$, and thus $q \in \delta^\odot(v,\alpha) \subseteq  \cup_{v\in N^\odot(P,\alpha)} \delta^{\odot}(v, \alpha)$. Similarly, if (\ref{astcase}) is satisfied, then $\parentstar(v) \in \lastextent(p) \cap \firstextent(q)$, and it follows $v\in N^\ast(P,\alpha)$. This implies that $q \in \delta^\ast(v,\alpha) \subseteq \cup_{v\in N^\ast(P,\alpha)} \delta^{\ast}(v, \alpha)$. 

To show $\RH \subseteq  \delta(P, \alpha)$ first suppose $q \in \cup_{v\in N^\odot(P,\alpha)} \delta^{\odot}(v, \alpha)$. Then, $q \in \Pos_\alpha$ and there is a $\odot$-node $v$ such that $\rig(v) \in \firstextent(q)$ and $\lef(v) \in \lastextent(P)$, which implies that $v = \lca(p,q)$ for some $p \in P$. By Lemma~\ref{lem:transition}(\ref{odotcase}) $q \in \delta(P,\alpha)$. 
If $q \in \cup_{v\in N^\ast(P,\alpha)} \delta^{\ast}(v, \alpha)$ then there exists a $v\in N^\ast(P,\alpha)$ such that $q \in \Pos(v)$ and $\parentstar(v) \in \firstextent(q)$. It follows from the definition of $N^\ast(P,\alpha)$ and By Lemma~\ref{lem:transition}(\ref{astcase}) that there exists a $p\in P$ and a $q'\in Pos_\alpha$ such that $v = \lca(p,q')$ and $\parentstar \in \lastextent(p) \cap \firstextent(q')$. Since both $p$ and $q$ are descendants of $v$  we have that $u = \lca(p,q)$ is a (not necessarily proper) descendant of $v$. It follows from Lemma~\ref{lem:pathextent} that $u \in \lastextent(p)$ and $\parentstar(u) \in \firstextent(q)$. Thus by Lemma~\ref{lem:transition}(\ref{astcase}) we have  $q \in \delta(P,\alpha)$. 
\end{proof}
We show that the total size of the two sets $N^\odot(P,\alpha)$ and $N^\ast(P,\alpha)$ is $O(|P| + |\delta(P,\alpha)|$). 

\begin{lemma}\label{lem:sizeNodotNast}
We have $|N^\odot(P,\alpha)| \leq |P| + |\delta^\odot(P,\alpha)|-1$  
and $|N^\ast(P,\alpha)| \leq |P| + |\delta^\ast(P,\alpha)|-1$. 
\end{lemma}
\begin{proof}
    By definition every node in $N^\odot(P,\alpha)$ is the lowest common ancestor of some position $p\in P$ and some position $q\in\delta^\odot(P,\alpha)$. The number of distinct pairwise lowest common ancestors of a subset of $\ell$ leaves in a tree cannot exceed $\ell-1$. Therefore, the number of lowest common ancestors between positions in $P$ and positions in $\delta^\odot(P,\alpha)$ can never be larger than $|P| + |\delta^\odot(P,\alpha)|-1$. The same argument holds for the number of nodes in $|N^\ast(P,\alpha)|$.
\end{proof}

\begin{figure}
    \centering
    \includegraphics[scale=0.33]{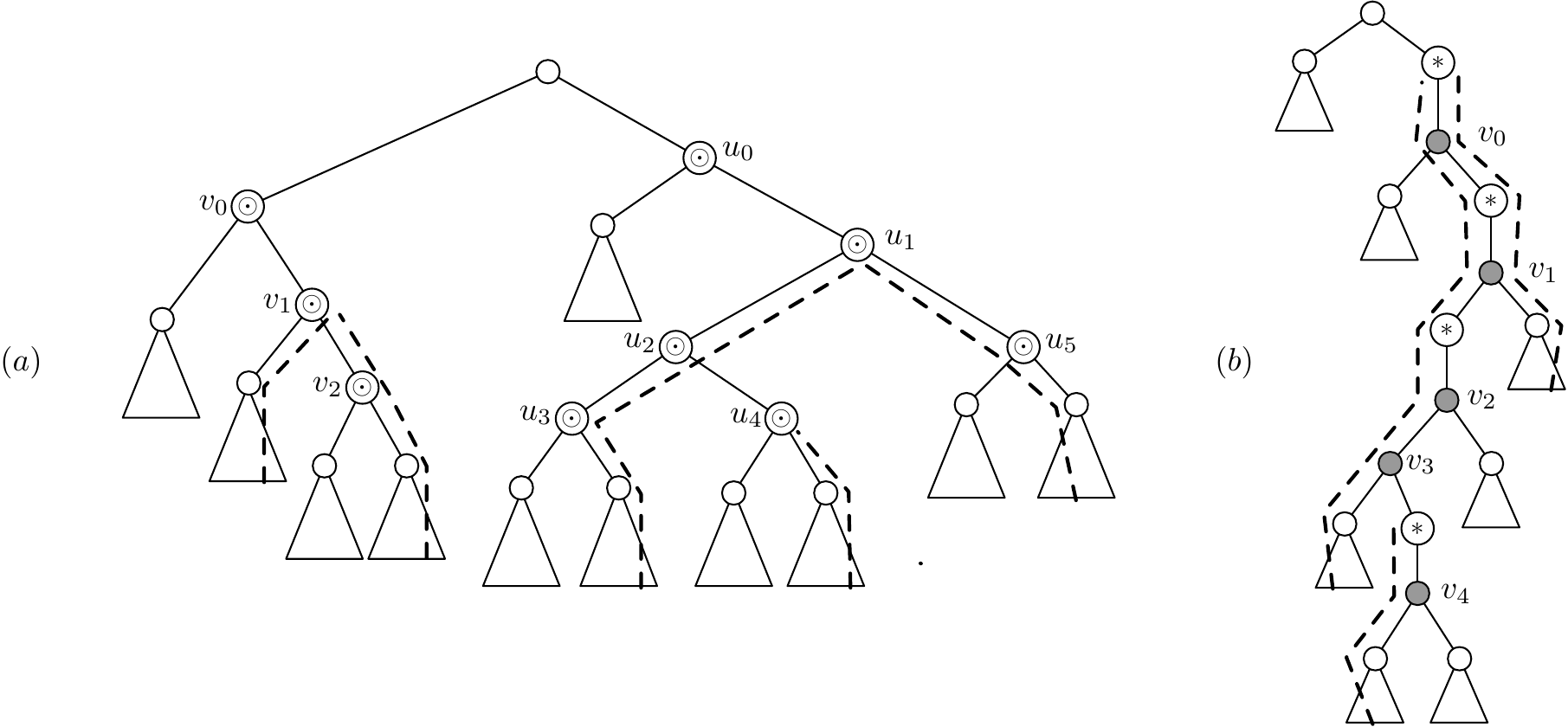}
    \caption{The dotted lines edges indicate the $\firstextent$ paths for the positions in $\Pos_\alpha$. (a) We have $\delta^\odot(v_2,\alpha) = \delta^\odot(v_1,\alpha)$ and they are contained in but not equal to $\delta^\odot(v_0,\alpha)$.  Both $\delta^\odot(u_3,\alpha)$ and  $\delta^\odot(u_5,\alpha)=\delta^\odot(u_1,\alpha)$  are contained in $\delta^\odot(u_0,\alpha)$.  Whereas,  $\delta^\odot(u_2,\alpha) \not\subseteq \delta^\odot(u_0,\alpha)$. Note, that we cannot dismiss e.g. $\delta^\odot(u_3,\alpha)$ at preprocessing time, since $\lef(u_0)$ might not be in $\lastextent(P)$.  (b) The grey nodes are lcas of a position in $P$ and a position in $\Pos_\alpha$.  The set $\delta^\ast(v_4,\alpha)$ is not included in any of the others. Nodes $v_2$ and $v_1$ has the same $\ast$-parent and thus $\delta^\ast(v_3,\alpha) \subseteq\delta^\ast(v_2,\alpha)$. Since  $\Pos_\alpha \cap \Pos(v_2) =\Pos_\alpha \cap \Pos(v_3)$ in the example then $\delta^\ast(v_3,\alpha) =\delta^\ast(v_2,\alpha)$. We also have $\delta^\ast(v_2,\alpha) \subset \delta^\ast(v_1,\alpha) = \delta^\ast(v_0,\alpha)$. We cannot dismiss any of the nodes at preprocessing time as  might be the case that only a subset (or none) of them is in $\lastextent(P)$.
    }
    \label{fig:odot-included}
\end{figure}


The internal transitions on the set of transition nodes are not disjoint and hence we cannot afford to compute internal transitions on each of the transition nodes explicitly. Fortunately, by Lemma~\ref{lem:pathextent}, the internal $\odot$-transitions (resp.\ $\ast$-transitions) of the nodes from $N^\odot(P,\alpha)$ (resp.\ $N^\ast(P,\alpha)$) form a laminar family. We use this to divide the computations of the internal transitions into computations on a bounded number of non-overlapping intervals. We implement this idea by compactly encoding all relevant transition nodes in the \emph{transition tree} defined in the following.  

\subsection{Transition Trees}\label{sec:transitiontree}
Let $P$ be a set of positions. Given a state-set transition $\delta(P,\alpha)$, we define the \emph{transition tree} $T$ as the subtree of $R$ induced by all nodes in $P$ and their ancestors (see Figure~\ref{fig:transitiontree}(a)). A \emph{segment} in $T$ is a path from a leaf or a branching node to (but not including) the nearest branching node above it (or to the root if no such branching node exists). The root node is its own segment (see Figure~\ref{fig:transitiontree}(b)). The  bottom node of a segment $s$ is denoted $\bots(s)$. Note that $\bots(s)$ is always a branching node, a leaf, or the root. Any branching node in $T$ is the lowest common ancestor of two nodes in $P$ and vice versa. Hence, we can compactly store $T$ in $O(|P|)$ space by storing $P$ and the branching nodes with pointers into $R$.

The following observation follows immediately from the fact that all nodes in $\lastextent(P)$ and their ancestors are contained in $T$.
\begin{observation}\label{obs:transitiontree} 
	Let $T$ be the transition tree for $P$ in $R$. If $v$ is a transition node for $\delta(P, \alpha)$ then $v$ is a node on a segment in $T$. 
\end{observation}
Thus it is enough to consider the nodes in the transition tree when computing the set of transition nodes.

\begin{figure}[t]
\centering  
\includegraphics[scale = 0.3]{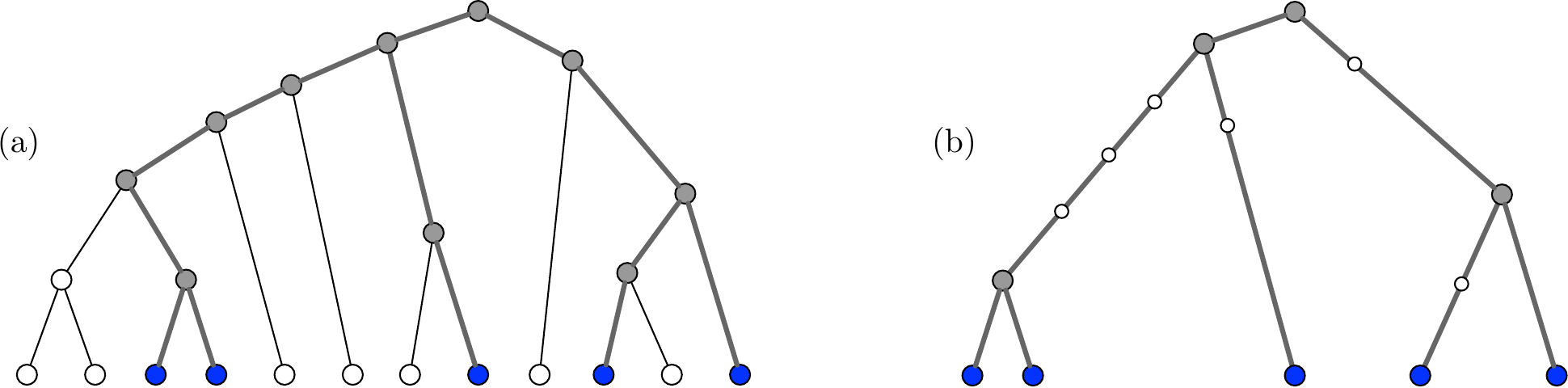}
  \caption{(a) The parse tree $R$. The blue leaves are the positions in $P$. (b) The transition tree $T$ of $P$. Branching nodes are grey and internal nodes on segments are white.}
  \label{fig:transitiontree}	
\end{figure}

\begin{observation}\label{obs:transitiontree_odotleft} 
Any $\odot$-node $u$ that is an internal node on a segment and  where $\lef(u) \in  \lastextent(P)$  must have its left child on the same segment. 
\end{observation}
To see why, observe that since $\lef(u) \in \lastextent(P)$, then $\lef(u)$ is an ancestor of a node in $P$, and thus by definition of $T$, $\lef(u)$ belongs to $T$. Since $u$ is an internal node on a segment $s$, only one of its children belongs to $T$. Therefore, $\lef(u)$ must also belong to segment $s$.

We will show in Section~\ref{sec:transitiontree-construct} that the transition tree $T$ and the key information we need in our algorithm can be computed in $O(|P|)$ time.


\section{Sparse State-Set Transitions}\label{sec:computingsparsetransitions}
We now present our $O(m)$ space data structure that supports computing a state-set transition $\delta(P, \alpha)$ in $O(|P|\log \log m + |\delta(|P|, \alpha)|)$ time.  We describe the data structure and analyze the space and preprocessing time in Section~\ref{sec:datastructure}.
The high-level idea of our sparse state-set algorithm is to identify the transition nodes for $\delta^\odot(P, \alpha)$ and $\delta^\ast(P, \alpha)$ using the transition tree. The state-set transitions for each set of nodes induce a partition of $\Pos_\alpha$ into nested intervals. We partition these intervals into non-overlapping intervals and then compute the internal transitions for each new interval. 
We then compute the union of these, which by the decomposition in Lemma~\ref{lem:transitiondecomp}, is precisely the set $\delta(P,\alpha)$. 
We describe how to construct the transition tree in Section~\ref{sec:transitiontree-construct}. In Section~\ref{sec:computetransitionnodes} we describe how to find the set transition nodes and how to construct the intervals. In Section~\ref{sec:computing-internal} we we show how to compute the internal transitions efficiently. Finally, in Section~\ref{sec:mainalg} we put everything together to get an algorithm for computing a state-set transition $\delta(P, \alpha)$ in $O(|P|\log \log m + |\delta(P, \alpha)|)$ time.

\subsection{Data Structure}\label{sec:datastructure}
We store the regular expression with labels on leaves and internal nodes together with the following components. 

\begin{itemize}
    \item For each node $v$ in $R$, we store the range of positions that are descendants of $v$, the depth of $v$, the highest node in $\lastextent(v)$, and the highest node in $\firstextent(v)$. We also store a pointer $\parentodot(v)$ to the lowest ancestor $u$ of $v$ such that $u$ is an $\odot$-node and $v\in T(\lef(u))$, and a pointer $\parentstar(v)$ to its lowest star ancestor.
	
 	\item At each branching node $v$ we store the position of the rightmost leaf in $\lef(v)$  and the position of the leftmost leaf in $\rig(v)$.
    \item Data structures for $R$ that supports \emph{lowest common ancestor queries} and \emph{first label queries}. Given a node $v$ and a character $\alpha$, a first label query, denoted $\firstlabel(v,\alpha)$, returns the lowest ancestor of $v$ whose label contains $\alpha$.
  
\end{itemize}
Furthermore, we store the following information for each $\alpha \in \Sigma$.
\begin{itemize}
    \item Arrays $A_\alpha$
	 and $D_\alpha$, where $A_\alpha[i]$ is the $i$th position labeled $\alpha$ in the left-to-right ordering of $\Pos_\alpha$, and $D_\alpha[i]$ is the depth of the highest node in $\firstextent(A_\alpha[i])$. 
	
    \item A data structure supporting predecessor and successor queries on the positions in $\Pos_\alpha$. That is, given any position $p$ in $R$ the predecessor (successor) query returns the position in $A_\alpha$ of the nearest position labeled $\alpha$ to the left (right) of $p$ including $p$ itself. For a branching node $v\in R$, we define the successor of $v$ in $\Pos_\alpha$ as the successor in $\Pos_\alpha$ of the leftmost leaf in $\rig(v)$. Note that this corresponds to the first position labeled $\alpha$ after $v$ in the order obtained by an inorder traversal of the nodes in $R$. Similarly, we define the predecessor of $v$ in $\Pos_\alpha$  as the predecessor  in $\Pos_\alpha$ of the rightmost leaf in $\lef(v)$. 
    
    \item A data structure on $D_\alpha$ that supports \emph{range minimum queries}. Given any pair of indices $l$ and $r$, the range minimum query on $D_\alpha$ returns a minimum value in the subarray $D_\alpha[l,r]$.
    
	\item For each node $v$ containing label $\alpha$:
	\begin{itemize}
		\item A pointer $\nex^\odot(v,\alpha)$ to the lowest proper ancestor $u$ of $v$ such that $v\in T(\lef(u))$,  $\alpha\in \lab(u)$, and $\delta^{\odot}(u,\alpha)$ is non-empty. If no such $u$ exists we store a null pointer to indicate this. 
		  \item A pointer $\nex^{\ast}(v, \alpha)$ to  the lowest proper ancestor $u$ of $v$ labeled  $\alpha$ such that there exists a $q\in\delta^\ast(u,\alpha)$ where $q \in T(\rig(u))$ if $v \in T(\lef(u))$ and $q\in T(\lef(u))$ if $v\in T(\rig(u))$. If no such $u$ exists we store a null pointer to indicate this.
        \item The range of positions in $A_\alpha$ that are descendants of $v$, of $\lef(v)$, and of $\rig(v)$, respectively.
	\end{itemize}
	
\end{itemize}

The idea of the $\nex^\odot$-pointers is that they form a chain of prospective nodes for $N^\odot$ with label $\alpha$. Any node $u$ from $N^\odot(P,\alpha)$ with label $\alpha$ on a segment $s$ has its left child on the path and $\delta^{\odot}(u,\alpha) \neq \emptyset$, so it is included in this chain. We show that  at most one node can be from $N^\odot(P,\alpha)$ on each segment that does not have label $\alpha$. Furthermore, given a segment $s$, the nodes from the chain on $s$ that belong to $N^\odot(P,\alpha)$ form a subchain starting from the lowest node of the chain that is on $s$ to (not including) the first node in the chain that is either  not in $\lastextent(P)$ or not on $s$.  Similarly, the $\nex^\ast$-pointers form a chain of prospective nodes for $N^\ast$ with label $\alpha$.

\paragraph{Space}
The regular expression and the labels use $O(m)$ space. The arrays $A_\alpha$ and $D_\alpha$,  $\alpha \in \Sigma$, use $O(m)$ space in total. We use linear space and linear preprocessing time data structures to support lowest common ancestors in constant time~\cite{HT1984,BFC2000}, first label queries in $O(\log \log m)$ time~\cite{Die89}, predecessor queries in $O(\log \log m)$ time~\cite{Willard1983,MN1990}, and range minimum queries in constant time~\cite{HT1984,BFC2000}. For each alphabet character, the total size of these data structures is linear in the number of leaves labeled with that character. Thus in total the space for these data structures is linear in $m$. The cited data structures for predecessor queries both use randomization, but since we only need a static structure it is straightforward and well-known how to obtain the same bound deterministically by combining deterministic dictionaries~\cite{HMP2001} with a simple two-level approach (see, e.g., Thorup~\cite{Thorup2003}). We store at most two $\nex$ pointers for each label in $R$ and a single pointer for each position using $O(m)$ space.  The remaining information uses $O(m)$ space. 

\paragraph{Preprocessing} We compute the range of positions that are descendants of $v$, the depth of $v$, the $\parentodot(v)$ and $\parentstar(v)$ pointers, and the positions of $\lef(v)$ and $\rig(v)$ using tree traversal in linear time. 
To compute the highest node for each node in $R$,  we first compute for each node  $u \in R$ if $\epsilon \in R(u)$ using a linear time bottom-up tree traversal. In top-down traversal we then compute  the highest node $H_L(v)$ in $\lastextent(v)$ for each node $v\in R$ using the following rules: If $v$ is the left child of an $\odot$-node and $\epsilon \not\in R(\rig(v))$ then $H_L(v) = v$. Otherwise, $H_L(v) = H_L(\parent(v))$. We compute the highest node in $\firstextent(v)$ similarly. We construct the arrays $A_\alpha$ and $D_\alpha$, for all $\alpha \in \Sigma$ in a single tree traversal.

To compute the remaining information we do the following for each $\alpha\in \Sigma$. Construct a tree $R_\alpha$ containing all nodes with label $\alpha$. To do this, we use $\lca$ queries on each consecutive pair of leaves in $A_\alpha$ from left to right. By keeping track of the depths of the nodes and checking if the newest node is an ancestor of the previous node it is straightforward to implement this in linear time. To compute the $\nex$-pointers we do a top-down traversal of $R_\alpha$. In each node $u$ we check if $\delta^\odot(u,\alpha)$ and $\delta^\ast(u,\alpha)$ are empty using range minimum queries on $D_\alpha$. To check if $\delta^\odot(u,\alpha)$ we do the range minimum query on the range of positions in $\Pos_\alpha$ in $\rig(v)$. If the depth returned is less than or equal to $\depth(\rig(v))$ then $\delta^\odot(u,\alpha)$ is non-empty.  For $\delta^\ast(u,\alpha)$ we do the query separately on the intervals corresponding to the left and the right child and we compare with the depth of $\parentstar(v)$. With this information, we can  compute the $\nex$-pointers during the traversal of $R_\alpha$ in constant time per node. The total size of all the $R_\alpha$ trees $O(m)$, since each tree has size $2|\Pos_\alpha| -1$. Thus the total time used for each $\alpha$ is $O(|\Pos_\alpha|)$.
 Hence, it follows that the total preprocessing time is $O(m)$.

\subsection{Constructing the Transition Tree}\label{sec:transitiontree-construct}

We say that a node $v$ is a \emph{$\odot$-live node} if $\lef(v) \in \lastextent(P)$ and $v$ is a $\odot$-node.
Note that any node in $N^\odot(P,\alpha)$ is a $\odot$-live node and a node in the transition tree $T$. It follows from Observation~\ref{obs:transitiontree_odotleft} that any $\odot$-live node that is an internal node on a segment has its left child on the segment. A segment $s$ in $T$ is called a \emph{$\ast$-segment} if $\bots(s)$ is not the root and  $\parentstar(\bots(s))\in \lastextent(P)$. 

We compute the compact representation of the transition tree as follows. Let $P$ be the set of leaves and repeatedly take the $\lca$ of adjacent nodes to form the internal nodes and the segments of $T$. Using a tree traversal on the compact transition tree we also compute for all branching nodes $v$ in $T$ the depth of the highest node in $\lastextent(P \cap \Pos(v))$, all $\odot$-live branching nodes in $T$, and all $\ast$-segments of $T$. Hence, we have the following result.

\begin{lemma}\label{lem:transitiontree-compute}
In $O(|P|)$ time we can compute
    the transition tree $T$ of $P$,
    for all branching nodes $v$ in $T$ the depth of the highest node in $\lastextent(P \cap \Pos(v))$,
    all $\odot$-live branching nodes in $T$, and
    all $\ast$-segments of $T$.
\end{lemma}

\begin{observation}
We can check in  constant time  if a node $v\in T$ is in $\lastextent(P)$ given the segment it is on.
\end{observation}

If $v$ is a leaf or a branching node, we already computed the information. Otherwise, $v$ is an internal node on a segment $s$. Then we compare the depth of $v$ with the depth $d$ of the highest node in $\lastextent(P \cap \Pos(\bots(s)))$. Now $v$ is in $\lastextent(P)$ if and only if the depth of $v$ is at least $d$.

\subsection{Computing Transitions Nodes and Intervals }\label{sec:computetransitionnodes}

We construct two sets of nodes $M^\odot$ and $M^\ast$ that consists of all nodes in $N^\odot(P,\alpha)$ and $N^\ast(P,\alpha)$, respectively, together with a constant number of other nodes per segment. We compute these sets for each segment using a depth-first traversal of the transition tree. We also construct sets $L^\odot$ and $L^\ast$, that partition $\Pos_\alpha$ into intervals in order to avoid recomputing overlapping internal transitions. We associate each interval with the lowest node from $M^\odot$ (respectively $M^\ast$) that can contain the positions in its internal transition (see Figure~\ref{fig:intervals}).

\begin{figure}[t]
\centering  
\includegraphics[scale = 0.35]{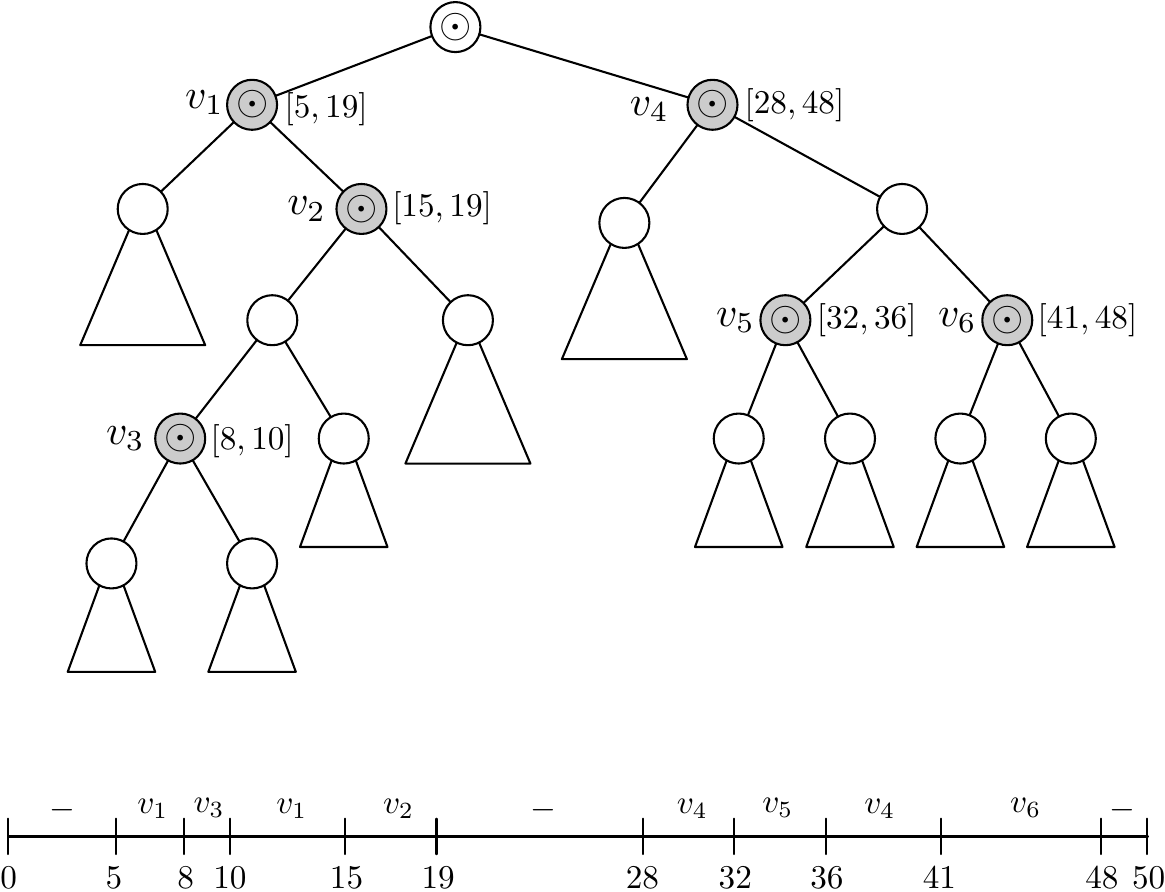}
  \caption{The nodes in $M^\odot$  are colored grey. The interval $[l_v,r_v]$ covered by $\rig(v)$ is written next to the nodes. The list $L^\odot=\{(5,7,v_1), \allowbreak (8,10,v_3),  \allowbreak (11,14,v_1),  \allowbreak (15,19,v_2),  \allowbreak (28,31,v_4), (32,36, v_5), (37,49,v_4), (41,48,v_6) \}$. In  the final algorithm, we compute and return the union of $\delta_{[5,7]}^\odot(v_1, \alpha)$, $\delta_{[8,10]}^\odot(v_3, \alpha)$, $\delta_{[11,14]}^\odot(v_1, \alpha), \ldots, \delta_{[41,48]}^\odot(v_6, \alpha)$.}
  \label{fig:intervals}	
\end{figure}

\subsubsection{Computing Transitions Nodes}
We first explain how to compute the transition nodes using the $\nex$-pointers and the transition tree. 

\paragraph{Computing $M^\odot$} We find for each segment $s$ all $\odot$-transition nodes for $\delta(P, \alpha)$ as follows. 
\begin{enumerate}
 
\item\label{alg:firstodot}  \emph{Find first relevant $\odot$-node on $s$: } Set $x = \parentodot(\bots(s))$. If $x$ is not on $s$ or $x$ is not an $\odot$-live node stop.  
 
\item\label{alg:firstodottrans} \emph{Find first $\odot$-transition node on $s$: } Compute the successor $q$ of $x$ in $\Pos_\alpha$, i.e., $q$ is the successor in $\Pos_\alpha$ of the leftmost leaf in $R(\rig(x))$. If no such $q$ exists stop,  otherwise set $v=\lca(x,q)$.  If $v$ is not on $s$ stop. 
 If $v$ is a $\odot$-live node not labeled $\alpha$  we add $v$ to $M^\odot$. Now compute $x = \firstlabel(v, \alpha)$. If $x$ is not a $\odot$-live node set $x = \nex^{\odot}(x,\alpha)$. 
 
\item\label{alg:nextodot} \emph{Repeatedly find next transition node by following $\nex^{\odot}$-pointers: } 
  We find all $\odot$-transition nodes by repeatedly following $\nex^{\odot}$-pointers from $x$ and adding the visited nodes to $M^\odot$ as follows. As long as $\lef(x) \in \lastextent(P)$ and $x$ is on $s$ we add $x$ to $M^\odot$ and set $x=\nex^\odot(x,\alpha)$.
\end{enumerate}

\paragraph{Computing $M^\ast$}
  We find for each segment $s = (v, w)$ all $\ast$-transition nodes for $\delta(P, \alpha)$. 
  \begin{enumerate}
   
    \item  If $s$ is not a $\ast$-segment we stop. 
     
    \item\label{alg:firstast}  \emph{Find first $\ast$-transition node on $s$:} Compute the predecessor $q^-$ and successor $q^+$ of $\bots(s)$ in $\Pos_\alpha$ and let $v$ be the lowest of $\lca(\bots(s), q^-)$ and $\lca(\bots(s), q^+)$.  If $v$ is not on $s$ stop.  If $\alpha \not\in \lab(v)$ and $\parentstar(v)\in \lastextent(P)$ we add $v$ to $M^\ast$.

   \item\label{alg:nextast}  \emph{Repeatedly find next  transition node by following $\nex^{\ast}$-pointers: } We first compute $x = \firstlabel(v, \alpha)$. We find all $\ast$-transition nodes by repeatedly following $\nex^{\ast}$-pointers   from $x$  and adding the visited nodes to $M^\ast$ as follows. As long as $\parentstar(x) \in \lastextent(P)$ and $x$ is on $s$ we add $x$ to $M^\ast$ and set $x=\nex^\ast(v,\alpha)$.
   \end{enumerate}

\paragraph{Complexity}
We first analyze the time used to find $M^\odot$. We use $O(|M^\odot|)$ time to follow pointers. Additionally, we use $O(|T|\log \log m) = O(|P|\log \log m)$ time to compute the first label queries, as we do one first label query on each of the $|T|$ segments. The time to check if a node is $\odot$-live is constant. For all the bottom nodes of the segments the information is stored in the tree and for all the other nodes $v$ we can check in constant time if $\lef(v)\in \lastextent(P)$ as described in the end of Section~\ref{sec:transitiontree-construct} as $\lef(v)$ will always be on the current segment. Thus the total time used is $O(|M^\odot| + |P|\log \log m)$. 
Similarly, we use time $O(|M^\ast| + |P|\log \log m)$ to compute $M^\ast$. Next we analyze the size of $M^\odot$ and $M^\ast$.

\begin{lemma}\label{lem:size-modot}
$|M^\odot| = O(|\delta^\odot(P,\alpha)| + |P|)$. 
\end{lemma}
\begin{proof} 
We will prove that at most one node from $M^\odot$ from each segment is not in $N^\odot(P,\alpha)$. Recall that  $N^\odot(P,\alpha)$ consists of all the $\odot$-nodes $v$ that have $\lef(v)\in \lastextent(P)$ and $\rig(v) \in \firstextent(\Pos_\alpha)$.

Any node $u\in s$ added to $M^\odot$ in step~\ref{alg:nextodot} except the first one has $\delta(u,\alpha) \neq \emptyset$, since they were found using $\nex^{\odot}$-pointers. 
Thus, $\rig(u)\in \firstextent(\Pos_\alpha)$.  A node is only added if it is $\odot$-live, i.e., $\lef(u)\in \lastextent(P)$. Thus $u\in N^\odot(P,\alpha)$.
Therefore, only the first node found on each segment---the node from step~\ref{alg:firstodottrans}---might not be in $N^\odot(P,\alpha)$. Since there are $O(|P|)$ segments in $T$ we have $|M^\odot| = O(|N^\odot(P,\alpha)| + |P|)$. By Lemma~\ref{lem:sizeNodotNast} we have $|N^\odot(P,\alpha)| \leq |\delta^\odot(P,\alpha)| + |P|$ and thus $|M^\odot| = O(|\delta^\odot(P,\alpha)| + |P|)$.
\end{proof}
The argument for the size of $M^\ast$ is similar, but here we show that the number of nodes in $M^\ast$ that are not in $N^\ast(P,\alpha)$ is at most $2|P|$.
\begin{lemma}\label{lem:size-mast}
$|M^\ast| =O(|\delta^\ast(P,\alpha)| + |P|)$.
\end{lemma}
\begin{proof} We will prove that at most two nodes from $M^\ast$ from each segment is not in $N^\ast(P,\alpha)$. Recall that $N^\ast(P,\alpha)$ consists of all the nodes $v$ such that there exists a $q \in \Pos_\alpha$  and a $p \in P$  such that  $v= \lca(p,q)$  and  $\parentstar(v) \in \lastextent(p) \cap \firstextent(q)$. 

Any node $u$ added to $M^\ast$ by following $\nex^\ast$-pointers has $\delta^\ast(u,\alpha) \neq \emptyset$ and $\parentstar(u) \in \lastextent(P)$. Furthermore, if $\lef(u)$ is on the segment $s$, then there exists a position $q\in \rig(u) \cap \delta^\ast(u,\alpha)$ and a position $p \in P \cap \lef(u)$.  This implies that $u$ is in $N^\ast(P,\alpha)$. The argument for the case where $\rig(u)$ is on the segment $s$ is symmetric. 

At most two other nodes are added to $M^\ast$ for each segment (the first two nodes are added to $M^\ast$ on each segment). It follows that the total number of nodes in  $M^\ast$ is at most $ |\delta^\ast(P,\alpha)| + 2|P|$.
By Lemma~\ref{lem:sizeNodotNast} we have $|N^\ast(P,\alpha)| \leq |\delta^\ast(P,\alpha)| + |P|$ and thus $|M^\ast| = O(|\delta^\ast(P,\alpha)| + |P|)$. 
\end{proof}

Combining Lemmas~\ref{lem:size-modot} and~\ref{lem:size-mast} and the above discussion, the total time to compute $M^\odot$ and $M^\ast$ is $O(|\delta^\odot(P,\alpha)| + |P|\log\log m)$ and $O(|\delta^\ast(P,\alpha)| + |P|\log\log m)$, respectively.

\paragraph{Correctness}
We argue that the sets $M^{\odot}$ and $M^{\ast} $ include all transition nodes for $\delta(P, \alpha)$.  
We need the following lemma, which follows from the path structure of the last extent sets. 
\begin{lemma}\label{lem:odotsegmentlive}
Let $u$ be an $\odot$-live node and let $s$ be the segment in $T$ containing $u$.  
All $\odot$-nodes below $u$ on $s$ with $\lef(u)$ on $s$ are also $\odot$-live.
\end{lemma}
\begin{proof}
If $\lef(u)$ is not on $s$ then $u$ is a branching node in $T$ by construction of $T$. Thus $u = \bots(s)$ and it is trivially true since $u$ has no descendants on $s$. We will prove the case $\lef(u)$ on $s$ by contradiction.  Since $u$ is $\odot$-live there  exists a node $p \in \lef(u) \cap P$ such that $u \in \lastextent(p)$. Furthermore, $\bots(s) $ is an ancestor of some node $p'$ in $P$. Let $w = \lca (\bots(s), p)$. If $p'=p$ then $w = \bots(s)$. If $p'\neq p$ then $w$ is a branching node in $T$. Since there are no branching nodes internally on a segment it follows that $w=\bots(s)$. By Lemma~\ref{lem:pathextent} all nodes on $s$ are in $\lastextent(p)$.
\end{proof}

We are now ready to prove that $N^\odot(P,\alpha)$ is contained in $M^\odot$. Let $u$ be a node in $N^\odot(P,\alpha)$. There are two main cases in the proof depending on whether $u$ labeled $\alpha$ or not. If not, then we show that $u = \lca(x,q)$ in step~\ref{alg:firstodottrans}. If $u$ is labeled $\alpha$, then we show, that either $u$ is the first $\odot$-node on the segment, in which case it is added as the first node in step~\ref{alg:nextodot}, or it is a node on the path induced by the $\nex^\odot$-pointers. As we proved in Lemma~\ref{lem:odotsegmentlive} above, all the nodes on this path below $u$ are also $\odot$-live and we show by induction that all the $\odot$-live nodes from $s$ on this path are added to $M^\odot$ in step~\ref{alg:nextodot}.

\begin{figure}[t]
\centering  
\includegraphics[scale = 0.4]{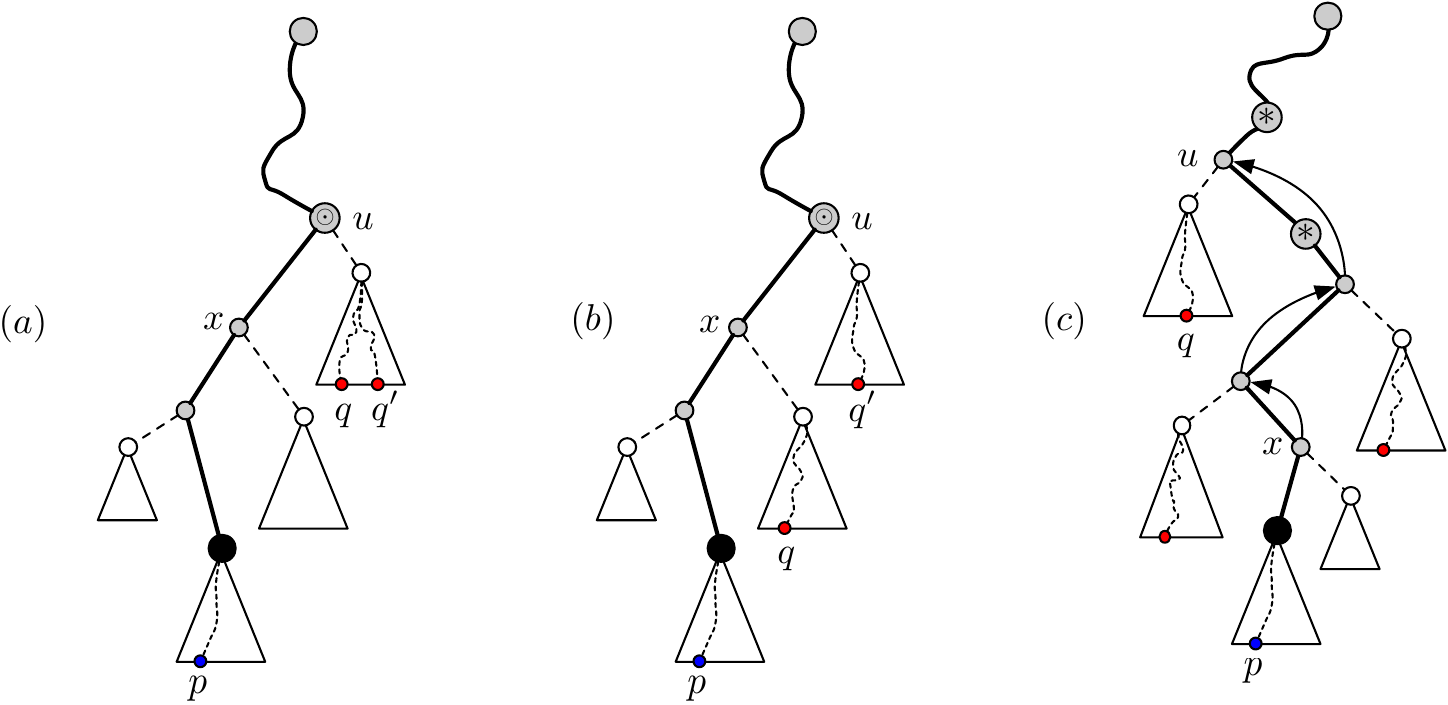}
  \caption{The thick edges are the edges on the segment $s$ and the black node is $\bots(s)$. In (a) both $q$ and $q'$ are positions in the right subtree of $u$, and $u = \lca(x,q)$. In (b)  the positions $q$ and $q'$ are in different subtrees and node $u$ has label $\alpha$.  In (c) the arrows indicates the $\nex^\ast$-pointers. }
  \label{fig:nodot-in-modot}	
\end{figure}

\begin{lemma}
$N^\odot(P,\alpha) \subseteq M^{\odot}$.
\end{lemma}
\begin{proof} 
Let $u \in N^\odot(P,\alpha)$. By definition $u$ is $\odot$-live,  $\rig(v) \in \firstextent(\Pos_\alpha)$, and $\delta(u,\alpha) \neq \emptyset$.
Let $s$ be the segment in $T$ containing $u$. 
We want to show that $u$ is added to $M^\odot$.  Define $x = \parentodot(\bots(s))$ and $q$  as the successor of $x$ in $\Pos_\alpha$ as in step~\ref{alg:firstodot} and~\ref{alg:firstodottrans} of the algorithm.  There are two cases depending on  whether $u$ has label $\alpha$ or not. 

\proofsubparagraph{Case 1: $\alpha\not\in \lab(u)$}
We first show that $u = \lca(x,q)$.  Since $u$ is an $\odot$-node we have that $u$ is an ancestor of $x$. If $u$ is a proper ancestor of $x$ then $x \in R(\lef(u))$. 
Since $\rig(u) \in \firstextent(\Pos_\alpha)$ there is a position $q'\in \Pos_\alpha$ such that $q' \in \Pos(\rig(u))$. This implies that $q \leq q'$. 
If $q \in \Pos(\rig(u))$ then $u =  \lca(x,q)$ and we are done (see Figure~\ref{fig:nodot-in-modot}(a)). Assume $q \not\in \Pos(\rig(u))$ and let $w = \lca(x,q) \neq u$.   
 Since $q < q'$ we have $u \prec w$. But then $q \in \Pos(w) \subset \Pos(\lef(u))$. Thus $u= \lca(q,q')$ and $u$ has label $\alpha$ which is a contradiction (see Figure~\ref{fig:nodot-in-modot}(b)). 
 It follows  that $u = \lca(x,q)$. Thus, since $u$ is $\odot$-live  and not labeled $\alpha$, $u$ is added to $M^\odot$ in step~\ref{alg:firstodottrans}.

\proofsubparagraph{Case 2: $\alpha\in \lab(u)$}
There are two subcases. 
In the first case $u = \parentodot(\bots(s))$. Then $u= x$ in step~\ref{alg:firstodot}. Since $\rig(u) \in \firstextent(\Pos_\alpha)$ we have $u =v =  \lca(x,q)$ in step~\ref{alg:firstodottrans}. Since $u$ is labeled $\alpha$ we have $u = \firstlabel(v,\alpha)= \firstlabel(u,\alpha)$. Since $\lef(u)\in \lastextent(P)$  we add $u$ to $M^\odot$ in the first iteration in step~\ref{alg:nextodot}.

 In the other case $u\neq \parentodot(\bots(s))$. 
 Let 
 \[S_\alpha = \{w  \in s \mid w \textrm{ is a } \odot\textrm{-live node}, \lef(w) \in s,  \alpha \in \lab(w) \textrm{ and }  \delta^\odot(w, \alpha) \neq \emptyset  \}\;.\] 
 Clearly, $u \in S_\alpha$, and thus $u \in M^\odot$ follows from the following claim.   
 
 \begin{claim}
 Let  $w$ be a node in $S_\alpha$. Then $w$ is added to $ M^\odot$ in step~\ref{alg:nextodot}.
 \end{claim}
\begin{claimproof} 
We prove the claim using induction on the height of $w$. For the base case let  
$w$ be the lowest node in $S_\alpha$.  Since $w$ is $\odot$-live it follows from Lemma~\ref{lem:odotsegmentlive} that $\parentodot(\bots(s))$ is also $\odot$-live and thus we continue from step~\ref{alg:firstodot}  to step~\ref{alg:firstodottrans}. Since $\rig(w) \in \firstextent(\Pos_\alpha)$  we have $w \preceq  v$, where $v= \lca(x,q)$ as computed in step~\ref{alg:firstodottrans}. Now either $w = \firstlabel(v, \alpha)$ and then $w$ is added to $M^\odot$ in the first iteration in step~\ref{alg:nextodot}. Otherwise, $w\prec \firstlabel(v, \alpha)$. Since $w$ is the lowest node in $S_\alpha$ then  $x = \firstlabel(v, \alpha)$ has its $\nex^\odot$-pointer pointing to $w$. Now either $x$ is an $\odot$-live node, in which case $x$ is set to $w$ in the end of the first iteration of step~\ref{alg:nextodot}. Otherwise,  $x$ is set to $w$ in the end of step~\ref{alg:firstodottrans}. In either case,  $w$ is added to $M^\odot$ in step~\ref{alg:nextodot}. 

Induction step: Let $w\in S_\alpha$ be a node that is not the lowest node in $S_\alpha$. Let $w'\in S_\alpha$ be the highest node in $S_\alpha$ that is a proper descendant of $w$.  Then $w'$ points to $w$. Let $v = \nex^\odot(w',\alpha)$. By definition of the $\nex^\odot$-pointers $v \in S_\alpha$ and $v \prec w'$. Now, since $w' \in R(\lef(w))$, $\alpha\in \lab(w)$ and $\delta^\odot(w,\alpha) \neq \emptyset$, we have $v =\nex^\odot(w',\alpha)\succeq w$.  If $v \neq w$ then $w' \succ v \succ w$ contradicting that $w'$ is the highest descendant of $w$ in $S_\alpha$. Thus $v = w$.
By the induction hypothesis, $w'$ was added to $ M^\odot$ in step~\ref{alg:nextodot}, whereafter we follow the $\nex^\odot$-pointer to $w$ and add $w$ in the next iteration. 
\end{claimproof}
\end{proof}

We will now prove that $N^{\ast}(P,\alpha)\subseteq M^{\ast} $. Here is an outline of the proof.  Recall that for any node $u$ in $N^\ast(P,\alpha)$, there exists a position $q$  in $\Pos_\alpha$ and a position $p$ in $P$, such that $u$ is the $\lca$ of $q$ and $p$ and $\parentstar(u)$ is in both  $\firstextent(q)$ and  $\lastextent(p)$.  We first prove that the segment $s$ containing $u$ is a $\ast$-segment. Then it follows easily that if $u=\bots(s)$ then $u$ is added to $M^\ast$ in step~\ref{alg:firstast} or~\ref{alg:nextast}. If $u$ is not the bottom node on $s$,  then due to the properties of the transition tree the child of $u$ not on $s$ contains $q$ in its subtree. We then show that if $u$ is not labeled $\alpha$ then it is added to $M^\ast$ in step~\ref{alg:firstast}. Otherwise, it is either added as the first node in step~\ref{alg:nextast}, or it is a node on the path induced by the $\nex^\ast$-pointers on $s$. 
By similar arguments as those in the previous proof all nodes below $u$ on this path has their $\parentstar$-node in $\lastextent(p)$, the first node on the path is $x$ found by a firstlabel query in step~\ref{alg:nextast}, and thus $u$ is added to $M^\ast$ in step~\ref{alg:nextast}. See Figure~\ref{fig:nodot-in-modot}.

\begin{lemma}
$N^{\ast}(P,\alpha)\subseteq M^{\ast} $.
\end{lemma}
\begin{proof}
Let $u$ be a node in $N^\ast(P,\alpha)$. Then there exists a position $p\in P$ and a position $q \in \Pos_\alpha$ such that $u = \lca(p,q)$ and $\parentstar(u)\in \lastextent(p)\cap \firstextent(q)$.
Let $s$ be the segment $u$ is on. We first prove that $s$ is a $\ast$-segment:  If $u = \bots(s)$ then  $\parentstar(\bots(s)) = \parentstar(u) \in \lastextent(P)$ and thus $s$ is a $\ast$-segment. Otherwise, $u \prec \bots(s)$. By construction of $T$ we have $p \in \Pos(\bots(s))$ and thus by Lemma~\ref{lem:pathextent} $\bots(s) \in \lastextent(p)$, since $\parentstar(u)$ is an ancestor of $\bots(s)$. 

We now prove that $u \in M^\ast$. Let $v$ be the lowest of $\lca(\bots(s),q-)$ and $\lca(\bots(s),q+)$ as computed in step~\ref{alg:firstast}. If $u = \bots(s)$ then $q\in \Pos(\bots(s))$ and thus $v = \bots(s) = u$. If $\alpha \not \in \lab(u)$ then $u$ is added to $M^\ast$ in step~\ref{alg:firstast}. Otherwise, it is added in step~\ref{alg:nextast}.

If $u\neq \bots(s)$,  let $u_s$ be the child of $u$ that is on $s$ and let $u_o$ be the child not on $s$. 
By construction of $T$ we have $p \in \Pos(u_s)$ and $p \in \Pos(\bots(s))$, since all bottom nodes of segments in $T$ have a position from $P$ in their subtree and any node that is the $\lca$ of two nodes in $P$ is a branching node in $T$. Thus $q \in \Pos(u_o)$.
There are two cases depending on whether $u$ is labeled $\alpha$ or not.  
 
 \proofsubparagraph{Case 1: $\alpha\not\in \lab(u)$} 
 Since $q \in \Pos(u_o)$ we have $\Pos(u_s) \cap \Pos_\alpha = \emptyset$.  Thus, either $q+$ or $q-$ is in $\Pos(u_o)$ and then $v= u$. Therefore, $u$ is added to $M^\ast$ in step~\ref{alg:firstast}.

\proofsubparagraph{Case 2: $\alpha\in \lab(u)$} 
Then $u$ is the $\lca$ of two positions in $\Pos_\alpha$, which implies that  there exists a position labeled $\alpha$ in $\Pos(u_s)$. It follows that either $q-$ or $q+$ is in $\Pos(u_s)$ and thus $u \preceq v$. If $u=v$  then $u$ is added to $M^\ast$ in the first iteration in step~\ref{alg:nextast}. Otherwise, $u \prec v$. Let $x = \firstlabel(v,\alpha)$. If $u=x$  then $u$ is added to $M^\ast$ in the first iteration in step~\ref{alg:nextast}. 

If $u \neq x$ then $u \prec x$. Since $q \in \Pos(u_o)$ and $\parentstar(u) \in \firstextent(q)$ we have $q \in \delta^\ast(u,\alpha)$. Therefore, the highest proper descendant of $u$ labeled $\alpha$ has its $\nex^{\ast}$-pointer pointing to $u$.  It follows that there is a chain of  $\nex^{\ast}$-pointers from $x$ to $u$.
It remains to show that for all the nodes in this chain $x = x_0, \ldots, x_k = u$ we have $\parentstar(x_i) \in \lastextent(P)$. If this is true we will reach $u$ in step~\ref{alg:nextast}. Since $p \in \Pos(\bots(s))$ then all nodes on $s$ are ancestors of $p$. Since $u \preceq x_i$ we have $\parentstar(u) \preceq \parentstar(x_i)$. By Lemma~\ref{lem:pathextent}  we have $ \parentstar(x_i) \in \lastextent(p)$.
\end{proof}

\subsubsection{Computing the Intervals}
We now compute the lists $L^\odot$ and $L^\ast$ of intervals for the nodes in $M^\odot$ and $M^\ast$, respectively. We do this by processing the nodes $M^\odot$ and $M^\ast$ in inorder using a depth-first left-to-right inorder traversal of $T$.

First, we compute for each node $v$ in $M^\odot$ the range $[l_v, r_v]$ of positions labeled $\alpha$ that are descendants of $\rig(v)$. If $v$ is labeled $\alpha$ the range $[l_v, r_v]$ is stored at $v$ and otherwise we use the predecessor data structure to compute it using the range stored at $\rig(v)$. Similarly, we compute for each node in $M^\ast$ the range $[l_v,r_v]$ of positions labeled $\alpha$ that are descendants of~$v$. 

\paragraph{Computing $L^\odot$} We compute the list of intervals $L^\odot$ by a depth-first left-to-right inorder traversal of $T$. We maintain a stack $S$ keeping track of the deepest node not finished and a  counter $\ell$ equal to the left starting point of the currently open interval. If there is no open interval $\ell = -1$. 
Initially, $S=\emptyset$ and $\ell = -1$.

 For each node $v\in M^\odot$ in inorder do the following:
\begin{itemize}
\item When we meet $v$ in the traversal after traversing the left subtree of $v$: If $\ell \neq -1$ append $(\ell, l_v-1, \tops(S))$ to~$L^\odot$. Set $\ell = l_v$ and push $v$ onto the stack $S$. 
\item When we finish the traversal of the subtree containing $v$: Note that in this case $\tops(S) = v$.  If $\ell \leq r_v$ append $(\ell, r_v, v)$ to $L^\odot$. Pop $v$ from $S$. If the stack is now empty set $\ell = -1$, otherwise set $\ell = r_v+1$.
\end{itemize}
Note that $\rig(v)$ might not be in $T$, in which case the two steps for $v$ follow immediately after each other.

\paragraph{Computing $L^\ast$} We maintain a stack $S$ and counter $\ell$ as before. 

For each node $v\in M^\ast$ in inorder do the following:
\begin{itemize}
\item First time we meet $v$ in the traversal: If $\ell \neq -1$ append $(\ell, l_v-1, \tops(S))$ to $L^\ast$. Set $\ell = l_v$ and push $v$ onto the stack $S$. 
\item Last time we meet $v$ in the traversal: Note that in this case $\tops(S) = v$.  If $\ell \leq r_v$ append $(\ell, r_v, v)$ to $L^\ast$. Pop $v$ from $S$. If the stack is now empty set $\ell = -1$, otherwise set $\ell = r_v+1$.
\end{itemize}

\paragraph{Complexity}
To compute the ranges use time $O(|M^\odot| + |P|\log \log m )$ as at most one node in $M^\odot$ on each segment is not labeled $\alpha$. We will store the nodes in $M^\odot$ in increasing order of depth for each segment. This is easy to maintain as we find them in order of decreasing depth. This way we can do the depth-first left-to-right traversal on the nodes of $M^\odot$ in $T$ in linear time in the size of $M^\odot$. Thus computing $L^\odot$ takes time $O(|M^\odot|)$. 

Similarly, we use time $O(|M^\ast|)$ to compute $L^\ast$. In summary, we have the following lemma. 

\begin{lemma}\label{lem:computetransitionnodes}
    The sets $M^\odot$ and $M^\ast$, and the lists $L^\odot$ and $L^\ast$, can be computed in time $O(|\delta(P,\alpha)| + |P|\log\log m)$.
\end{lemma}

\subsection{Internal Transitions}\label{sec:computing-internal}

We will compute and return the state-set transition by computing the internal transitions on the nodes in $L^\odot$ and $L^\ast$. 
Next, we show how to compute internal transitions efficiently using 3-sided range queries. We assume that the range $[l,r]$ is given as indexes in $A_\alpha$.

\paragraph{Computing $\delta^\odot_{[l,r]}(v, \alpha)$.} Given an $\odot$-node $v$, a character $\alpha$, and a range $[l,r]$ we compute $\delta^\odot_{[l,r]}(v, \alpha)$ as follows. We perform a 3-sided range reporting query $(l,r,\depth(\rig(v))$ on $D_\alpha$. That is, we return all positions in $D_\alpha[l,r]$ with a value less than or equal to $\depth(\rig(v))$. This can be done by a standard technique of recursively applying range minimum queries as follows. Let $j$ be the position returned by  range minimum query on $D_\alpha[l,r]$. If $D_\alpha[j]\leq \depth(\rig(v))$ return $A_\alpha[j]$ and recurse on the ranges $[l,j-1]$ and $[j+1,r]$. We stop if this is not the case or if the range is empty.

For instance, suppose we compute $\delta^\odot_{[2,4]}(v_2, a)$ in our example in Figure~\ref{fig:exampleexpression}. The range $[2,4]$ in $A_\alpha$ corresponds to the positions $p_2$, $p_3$, and $p_5$.  
We find the highest first extent in to be $f = v_4$ corresponding to both $p_2$ and $p_3$. Suppose $j = 3$ corresponding to $p_3$. Then we compare $f$ with $\rig(v_2) = v_4$ and since $f$ is an ancestor of $v_4$ we report $p_3$ and repeat on the subarrays $[2,2]$ and $[4,4]$. On $[2,2]$ we return $p_2$, while on $[4,4]$ we do not get a position since $v_8$ is a proper descendant of $v_4$. 

Note, that we can get the output in sorted order if we first recurse on the range $[l,j-1]$, then report $A_\alpha[j]$, and then recurse on the range $[j+1,r]$.

The algorithm returns all positions $q$ with label $\alpha$ in $[l,r]$ such that $\rig(v) \in \firstextent(q)$ and is thus correct. 
Each recursive call uses constant time and we repeat at most $2|\delta^\odot_{[l,r]}(v, \alpha)| + 1$ times. Hence, in total we use $O(1+ |\delta^{\odot}_{[l,r]}(v, \alpha)|)$ time.

\paragraph{Computing $\delta^\ast_{[l,r]}(v,\alpha)$.} To compute an internal transition for the $\ast$-case we do a 3-sided range reporting query $(l,r,\depth(\parentstar(v))$. Correctness follows as above and the time and space bounds are the same.

In summary, we have the following result. 

\begin{lemma}\label{lem:internaltransitioninterval}
	Let $R$ be a regular expression of size $m$. Given $R$ we can build a data structure in $O(m)$ space and preprocessing time, such that given a node $v$ in $R$, a character $\alpha \in \Sigma$, and a range $[l,r]$ we can compute 
 $\delta^\odot_{[l,r]}(v,\alpha)$ in sorted order in time $O(1 + |\delta^\odot_{[l,r]}(v,\alpha)|)$ and  $\delta^{\ast}_{[l,r]}(v, \alpha)$ in sorted order in time $O(1+ |\delta^{\ast}_{[l,r]}(v, \alpha)|)$.
\end{lemma}

\subsection{Computing State-Set Transitions}\label{sec:mainalg}
Given a set of positions $P$ and a character $\alpha$, we compute the state-set transition $\delta(P,\alpha)$ as follows. For simplicity, we assume that the positions in $P$ are sorted according to their left-to-right order since otherwise we can sort them in additional $O(|P|\log \log m)$ time using integer sorting.  
The final algorithm for computing fast state-set transitions is as follows. 

\begin{enumerate}
    \item\label{alg:transitiontree} First, we construct the transition tree and all the information from Lemma~\ref{lem:transitiontree-compute} as in Subsection~\ref{sec:transitiontree-construct}.
    \item\label{alg:transitionnodes} We then compute $M^\odot$, $M^\ast$, $L^\odot$, and $L^\ast$  as in Subsection~\ref{sec:computetransitionnodes}.
    \item~\label{alg:transitions} Finally, we compute and return the state-set transition by computing
	\[D^\odot =  \bigcup_{(l,r,u) \in L^{\odot}} \delta^\odot_{[l,r]}(u, \alpha) 
	\qquad \textrm{ and } \qquad
	D^\ast = \bigcup_{(l,r,u) \in L^{\ast}} \delta^\ast_{[l,r]}(u, \alpha).\]
 by processing $L^\odot$ and $L^\ast$ from left-to-right using the procedure from Subsection~\ref{sec:computing-internal} that computes  $\delta^\odot_{[l,r]}(u, \alpha)$ and $\delta^\ast_{[l,r]}(u, \alpha)$ in sorted order.  Since both lists $L^\odot$ and $L^\ast$ are sorted the lists $D^\odot$ and $D^\ast$ are also sorted. We merge these two lists to get the final output. 
\end{enumerate}

\paragraph{Analysis of the algorithm}
Step~\ref{alg:transitiontree} uses $O(|P|)$ time by Lemma~\ref{lem:transitiontree-compute}, and 
Step~\ref{alg:transitionnodes} uses $O(|\delta(P,\alpha)| + |P|\log\log m)$ time by Lemma~\ref{lem:computetransitionnodes}.
By Lemma~\ref{lem:internaltransitioninterval} the time to compute all internal transitions in step~\ref{alg:transitions} is 
\begin{equation}\label{eq:time}
O\left(\sum_{(l,r,u) \in L^{\odot}} (1 + |\delta^\odot_{[l,r]}(u, \alpha) |) + \sum_{(l,r,u) \in L^{\ast}} (1 + |\delta^\ast_{[l,r]}(u, \alpha) |) \right)
\end{equation}
The length of list $L^\odot$ is $O(|M^\odot|)$ as each interval endpoint is due to meeting a node in $M^\odot$ for the first or last time in the traversal. Similarly, $|L^\ast| = O(|M^\ast|)$. Hence, \eqref{eq:time} is $O(|M^\odot| + |M^\ast| + |\delta(P,\alpha)|)$. 
The time to merge the two lists is linear in the total size of the lists since the lists are sorted. Thus the  time for step~\ref{alg:transitions} is $O(|M^\odot| + |M^\ast| + |\delta(P,\alpha)|)$. Plugging in the bounds from Lemma~\ref{lem:size-modot} and \ref{lem:size-mast} 
we get a total running time of $O(|\delta(P,\alpha)| + |P|\log\log m)$.


\paragraph{Correctness}
We have already proved that $N^\odot(P,\alpha) \subseteq M^\odot$ and $N^\ast(P,\alpha) \subseteq M^\ast$. 
Thus, by Lemma~\ref{lem:transitiondecomp} the set of positions computed in step 4 is $\delta(P, \alpha)$. There are at most $|P|$ nodes in  $M^\odot \setminus N^\odot(P,\alpha)$. These nodes are all in $\lastextent(P)$ and thus the set of internal transitions on these will all be in $\delta(P,\alpha)$. Similarly, for the nodes in $M^\ast\setminus  N^\ast(P,\alpha)$.

It remains to show that the partition into lists is correct. We show that a position $q \in \bigcup_{v \in M^\odot} \delta^\odot(v,\alpha)$ is covered by the lowest node in $M^\odot$ such that $q \in \delta^\odot(u,\alpha)$, i.e., where covered means that $q \in \delta^\odot_{[l,r]}(u,\alpha)$ for some $l,r$ such that $(l, r, u) \in L^\odot$. This follows easily from the properties of the inorder traversal. The arguments for $q \in \cup_{v \in M^\ast} \delta^\ast(v,\alpha)$
are similar. For completeness, the full proof is shown below. 

\begin{lemma}
We have 
\[
\bigcup_{v \in M^\odot} \delta^\odot(v,\alpha)\quad = \bigcup_{(l,r,u) \in L^\odot} \delta_{[l,r]}^\odot(u,\alpha) 
\qquad \textrm{ and } \qquad
\bigcup_{v \in M^\ast} \delta^\ast(v,\alpha) \quad = \bigcup_{(l,r,u) \in L^\ast} \delta_{[l,r]}^\ast(u,\alpha) \;.
\]
\end{lemma}
\begin{proof} We split the proof into two cases. 

\proofsubparagraph{Case 1: $ \bigcup_{v \in M^\odot} \delta^\odot(v,\alpha)=\bigcup_{(l,r,u) \in L^\odot} \delta_{[l,r]}^\odot(u,\alpha)$}  
We are only adding subranges of the range of $\rig(v)$ for any node in $v\in M^\odot$ to $L^\odot$. This immediately implies that $\cup_{(l,r,u) \in L^\odot} \delta_{[l,r]}^\odot(u,\alpha) \subseteq \cup_{v \in M^\odot} \delta^\odot(v,\alpha)$.

For the other direction, let
$q$ be a position in  $ \cup_{v \in M^\odot} \delta^\odot(v,\alpha)$ and let  $u$ be the deepest node in $M^\odot$ such that $q \in \delta^\odot(u,\alpha)$.
We will show that $q \in \delta^\odot_{[l,r]}(u,\alpha)$ for some $l,r$ such that $(l, r, u) \in L^\odot$.

Let $\predu(u) = \arg\max_{x \in M^\odot}\{r_x < q\}$ and $\succu(u) =\arg\min_{x \in M^\odot}\{l_x > q\}$. If neither $\predu(u)$ nor $\succu(u)$ are in $T(\rig(u))$ then nothing happens with $\ell$ and the stack $S$ after the step where we add $u$ to the top of the stack until we leave $u$ the last time. At this point $\tops(S) = v$ and $\ell = l_u$ and $(\ell, r_u, u) = (l_u, r_u, u)$ is appended to $L^\odot$.
If  $\predu(u)\in T(\rig(u))$ then when we leave $\predu(u)$,  node $u$ will be on top of the stack and $\ell = r_{\predu(u)}+1$. If also $\succu(u) \in T(\rig(u))$ then $\succu(u)$ is the next node from $M^\odot$ we process in our inorder traversal. When we meet $\succu(u)$ before traversing its right subtree we add $(\ell, l_{\succu(u)}-1, \tops(S)) = (r_{\predu(u)}+1, l_{\succu(u)}-1,u)$ to $L^\odot$. Since $r_{\predu(u)}+1\leq q \leq l_{\succu(u)}-1$  we have $q \in \delta^\odot_{[r_{\predu(u)}+1,l_{\succu(u)}+1]}(u,\alpha)$. If $\succu(u)\not \in T(\rig(u))$ then the next change we perform is when leaving $u$. Here we add $(\ell, r_{u}-1, u)$ to $L^\odot$. 
If only $\succu(u) \in T(\rig(u))$  then we have $\tops(S) = u$ and $\ell = l_u$ when we process $\succu(u)$ the first time (after visiting its left subtree). Then we add  $(l_u, l_{\succu(u)}-1, u)$ to $L^\odot$.

\proofsubparagraph{Case 2: $\bigcup_{v \in M^\ast} \delta^\ast(v,\alpha) = \bigcup_{(l,r,u) \in L^\ast} \delta_{[l,r]}^\ast(u,\alpha)$} 
The arguments are similar to case 1. We are only adding subranges of the range of $v$ for any node in $v\in M^\ast$ to $L^\ast$. This immediately implies that $\cup_{(l,r,u) \in L^\ast} \delta_{[l,r]}^\ast(u,\alpha) \subseteq \cup_{v \in M^\ast} \delta^\ast(v,\alpha)$.

For the other direction let $q$ be a position in  $ \bigcup_{v \in M^\ast} \delta^\ast(v,\alpha)$ and let  $u$ be the deepest node in $M^\ast$ such that $q \in \delta^\ast(u,\alpha)$.
We will show that $q \in \delta^\ast_{[l,r]}(u,\alpha)$ for some $l,r$ such that $(l, r, u) \in L^\ast$.

If there are no other nodes from $M^\ast$ than $u$ in $T(u)$ then nothing happens between the first and last time we meet $u$ in the traversal and $(l_u, r_u, u)$ is appended to $L^\odot$ when we meet $u$ the last time. Similarly to case 1, let $u_{\predu(u)} = \arg\max_{x \in M^\ast}\{r_x < q\}$ and $\succu(u) =\arg\min_{x \in M^\ast}\{l_x > q\}$. If $\predu(u) \in T(u)$ then when we leave $\predu(u)$,  node $u$ will be on top of the stack and $\ell = r_{\predu(u)}+1$. If also $\succu(u) \in T(u)$ then $\succu(u)$ is the next node from $M^\ast$ we process in our traversal. When we meet $\succu(u)$ the first time we add $(\ell, l_{\succu(u)}-1, \tops(S)) = (r_{\predu(u)}+1, l_{\succu(u)}-1,u)$ to $L^\ast$. Since $r_{\predu(u)}+1\leq q \leq l_{\succu(u)}-1$  we have $q \in \delta^\ast_{[r_{\predu(u)}+1,l_{\succu(u)}+1]}(u,\alpha)$. If $\succu(u)\not \in T(u)$ then the next change we perform is when leaving $u$. Here we add $(\ell, r_{u}-1, u)$ to $L^\ast$. 
If only $\succu(u) \in T(u)$  then we have $\tops(S) = u$ and $\ell = l_u$ when we process $\succu(u)$ the first time (after visiting its left subtree). Then we add  $(l_u, l_{\succu(u)}-1, u)$ to $L^\ast$.
\end{proof}

\noindent In summary, we have the following result.

\begin{lemma}
	Given a regular expression $R$ of size $m$, we can build a data structure in $O(m)$ space and preprocessing time such that given any set of positions $P$ in $R$ and character $\alpha \in \Sigma$, we can compute $\delta(P, \alpha)$ in $O(|P|\log \log m + |\delta(P, \alpha)|)$ time.
\end{lemma}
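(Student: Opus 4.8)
The plan is to establish the lemma in two parts: first, that the algorithm in Steps~1--4 is correct, i.e., that it outputs exactly $\delta(P,\alpha)$, and second, that it runs within the claimed time bound on the $O(m)$-space data structure described above. For correctness I would argue that the sets $M^\odot$ and $M^\ast$ collected by the algorithm contain, respectively, all $\odot$-transition nodes and all relevant $\ast$-transition nodes of $\delta(P,\alpha)$, and only transition nodes. Granting this, Lemma~\ref{lem:transitiondecomposition}(i) immediately gives that the union $\bigcup_{u\in M^\odot}\delta^{\odot}(u,\alpha)\cup\bigcup_{u\in M^\ast}\delta^{\ast}(u,\alpha)$ computed in Step~4 equals $\delta(P,\alpha)$, since including non-relevant $\odot$-transition nodes does not change the union (their internal transitions are absorbed by those of their dominators, exactly as in the proof of Lemma~\ref{lem:transitiondecomposition}).

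The heart of the correctness argument is localizing the transition nodes per segment of the transition tree $T$. Fix $p\in P$. Every transition node for $\delta(p,\alpha)$ has the form $\lca(p,q)$ for some $q\in\Pos_\alpha$, so the lowest candidate is $v=\lca(p,\Pos_\alpha)$; the next-lowest candidate must also be an $\lca$ of $p$ with a second $\alpha$-position, hence must contain $\alpha$ in its label, so it is $\firstlabel(v,\alpha)$. By Lemma~\ref{lem:transitiontree} every transition node lies on a segment of $T$, so it suffices to walk up, inside each segment $s$, from the first candidate through the $\alpha$-labeled ancestors via the $\nex^\odot$ (resp.\ $\nex^\ast$) pointers. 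I would then check, case by case, that the skip tests in Step~2(a)--(c) implement exactly the two non-relevance conditions of Lemma~\ref{lem:relevant-odot} (using Lemma~\ref{lem:firstinclude} and Lemma~\ref{lem:transitiontree-dom} to see that $\odot$-dominance is monotone along the single $\firstextent$ path and that the relevant dominators are branching nodes, in particular $\tops(s)$), and that in Step~3 the interplay of $\efirst$ and $\elast$ together with the stopping rule at the first $\ast$-segment above $s$ enumerates precisely the relevant $\ast$-transition nodes: a node on a $\ast$-segment is relevant unless $\ast$-dominated by another node of $N^\ast(P,\alpha)$, and $\efirst^\ast$ jumps exactly over such dominated stretches while $\elast^\ast$ handles the boundary case where the dominating $\ast$-node sits just above $s$.

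For the running time I would account for each step on the stated structure. Step~1 costs $O(|P|)$ by Lemma~\ref{lem:transitiontree-compute}. In Step~2 each $\nex^\odot$ pointer followed reaches a node whose internal $\odot$-transition is a nonempty subset of $\delta(P,\alpha)$, and by Lemma~\ref{lem:transitiondecomposition}(ii)--(iii) these subsets are disjoint, so the pointer-chasing totals $O(|P|+|\delta(P,\alpha)|)$; in Step~3 each followed pointer either lands on a node added to $M^\ast$ (again with disjoint nonempty outputs) or advances within one of the $O(|P|)$ $\ast$-segments, giving $O(|P|+|\delta(P,\alpha)|)$ as well. The $\firstlabel$ queries --- at most one per segment, so $O(|P|)$ of them --- cost $O(|P|\log\log m)$. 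Step~4 invokes Lemma~\ref{lem:internaltransition}: in the bound \eqref{eq:time} the additive $1$ or $\log\log m$ per node is incurred at most once per segment for the single non-$\alpha$-labeled node, hence $O(|P|\log\log m)$ overall, while the $|\delta^{\odot}(\cdot,\alpha)|$ and $|\delta^{\ast}(\cdot,\alpha)|$ terms sum to $O(|\delta(P,\alpha)|)$ by the disjointness in Lemma~\ref{lem:transitiondecomposition}(ii). Summing yields $O(|P|\log\log m+|\delta(P,\alpha)|)$, and the preprocessing and space bounds follow from the component bounds already stated.

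I expect the main obstacle to be the case analysis in the second paragraph: verifying that the pointer walks in Steps~2 and~3 neither skip a relevant transition node nor report a node that is not a transition node at all. This is delicate because ``relevant'' is defined through $\first$-set containment along $\firstextent$ and $\lastextent$ paths that thread through branching nodes of $T$, whereas the skip tests only inspect $\tops(s)$, the locally stored last-extent information, and the precomputed $\nex$/$\efirst$/$\elast$ pointers; reconciling these requires Lemmas~\ref{lem:firstinclude}, \ref{lem:transitiontree}, \ref{lem:transitiontree-dom}, \ref{lem:relevant-odot}, and the definition of a $\ast$-segment to be applied in concert. Everything else --- the space and preprocessing bounds and the arithmetic collapsing \eqref{eq:time} --- is routine.
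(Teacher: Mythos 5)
Your proposal follows essentially the same route as the paper: argue that $M^\odot$ and $M^\ast$ capture all (relevant) transition nodes via the per-segment localization through $\lca(p,\Pos_\alpha)$ and $\firstlabel$, invoke Lemma~\ref{lem:transitiondecomposition} to conclude that the union in Step~4 equals $\delta(P,\alpha)$, and charge the time exactly as you do (pointer-chasing against disjoint nonempty outputs, one $\firstlabel$ query and one non-$\alpha$-labeled node per segment). The case analysis you flag as the main obstacle is treated just as tersely in the paper, which simply cites the pointer definitions together with Lemmas~\ref{lem:transitiontree} and~\ref{lem:relevant-odot}, so your plan matches the paper's proof in both structure and level of detail.
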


\section{Speeding Up State-Set Transitions}\label{sec:speedup}
We now show how to improve the run time of computing a state-set transition $\delta(P, \alpha)$ from $O(|P|\log \log m + |\delta(P, \alpha)|)$ to  $O(|P|\log \log \frac{m}{|P|} + |\delta(P, \alpha)|)$ while still using linear space. Finally, we show how to use this to obtain the main results of Theorems~\ref{thm:regexmatching}. We now require that the input positions in $P$ are sorted and the output positions in $\delta(P, \alpha)$ are reported in sorted order.

First, observe that the $\log \log m$ factor is from computing $|P|$ predecessor queries and $|P|$ first label queries in steps 2 and 3 in the main algorithm in Section~\ref{sec:mainalg}. The first label queries in turn are  reduced to computing $O(|P|)$ predecessor queries on the Euler tour of $R$~\cite{Die89}. In both cases, we need to answer a batch of $b = \Theta(|P|)$ predecessor queries on a set of size $t = \Theta(|\Pos_\alpha|)$ from a universe of size $u = \Theta(m)$. The batch is provided in sorted order and the output should also be sorted. 

We use a simple two-level data structure as follows.  We first partition the universe into $t$ intervals of size $u/t$ (except possibly the last which may be smaller). For each interval, we store a predecessor data structure over the subset of the elements in the interval using a reduced universe of size $u/t$. Furthermore, we also store a pointer to the nearest non-empty smaller interval. Using the same predecessor data structure as in Section~\ref{sec:mainalg} for each interval the total space is $O(t)$. We answer a batch of $b$ predecessor queries according to the following two cases: 
\begin{enumerate}
    \item If $b \leq t$ we process each predecessor query in the batch by identifying the at most two intervals containing the answer and then querying these predecessor data structures. In total, this uses $O(b \log \log (u/t)) = O(b \log \log (u/b))$ time. 
    \item If $b > t$ we simply merge the sorted batch of queries with the input  set using $O(b+t) = O(b)$ time. 
\end{enumerate}
Since the batch is sorted we can also return the output in sorted order in $O(b)$ time. It follows that the running time is bounded by $O(b \log \log (u/b)$. Plugging into to the algorithm of Section~\ref{sec:mainalg}, we obtain a data structure that uses $O(m)$ space and supports computing a state-set transition $\delta(P, \alpha)$ in time $O(|P|\log \log \frac{m}{|P|} + |\delta(P, \alpha)|)$. This completes the proof of Theorem~\ref{thm:finiteautomaton}.

Next consider Theorem~\ref{thm:regexmatching}. Let $Q$ be a string of length $n$ and let $S_0, \ldots, S_n$ be the state-sets in the simulation of $R$ on $Q$. We implement the state-set transitions using Theorem~\ref{thm:finiteautomaton}. Note that each state-set transition produces the output in sorted order as required. Since logarithms are concave we have that the total time for the state-set transitions is 
\[
O\left(\sum_{i=0}^n |S_i| \log \log \frac{m}{|S_i|}\right) = O\left((n+1)\frac{\Delta}{n+1} \log \log \frac{m}{\Delta/(n+1)}\right) = O\left(\Delta \log \log \frac{nm}{\Delta}\right) \;.
\]
The algorithm uses $O(m)$ space to store the representation of $R$ and at most two state sets during the simulation. This completes the proof of Theorem~\ref{thm:regexmatching}.  


\section{Conditional Lower Bound}\label{sec:lowerbound}
We now prove the conditional lower bound of Theorem~\ref{thm:lowerbound}. Our lower bound follows the reduction of Backurs and Indyk~\cite{BI2016} from the orthogonal vectors problem (OVP) to regular expression matching.

The orthogonal vectors problems is defined as follows.  Given two sets $A,B \subseteq \{0,1\}^d$ such that $|A|=M$, $|B|=N$, determine if there exists $a\in A$ and $b\in B$ such that $a\cdot b = 0$. For any $M  = \Theta(N^\alpha)$ for some $\alpha \in (0,1]$ and any constant $\epsilon > 0$, any algorithm for OVP with running time $O((MN)^{1-\epsilon})$ violates SETH for $d = \omega(\log N)$~\cite{Williams05,BK15}.

Backurs and Indyk~\cite{BI2016} showed hardness of regular expression matching using a reduction from OVP. Given an instance of OVP they show how to construct a regular expression $R'$ and a string $Q'$ in $O(Nd)$ time such that $Q'$ matches $R'$ if and only if there exists  $a\in A$ and $b\in B$ such that $a\cdot b = 0$. The reduction works in $O(Nd)$ time, the lengths of both $R'$ and $Q'$ is $\Theta(Nd)$, and the alphabet is $\{x, y\}$. The regular expression $R'$ has the form
$$R' = \left (\bigodot _{j=1}^{|Q'|} (x^\ast y^\ast) \right)\cdot P \cdot \left (\bigodot _{j=1}^{|Q'|} (x^\ast y^\ast) \right)\;.$$
Here $P$ is a regular expression of length $O(Md)$ with the property that a substring of $Q'$ can be derived from $P$ if and only if there exists  $a\in A$ and $b\in B$ such that $a\cdot b = 0$. The precise definition of $R'$, $Q'$, and $P$ can be found in~\cite{BI2016}. 

We claim that $\Delta_{R',Q'} = \Theta(|Q'|^2)$. To see this first note that $\Delta_{R',Q'}$ is at most $|R'||Q'| + 1 = O(|Q'|^2)$. For the lower bound, consider the sequence $S_0, S_1, \ldots, S_{|Q'|}$ of state sets in the NFA simulation, and focus on the first $2|Q'|$ positions in $R'$, i.e., the positions corresponding to the subexpression immediately before $P$. Since $Q$ is a string of $x$s and $y$s, we have $|S_1| = |Q'|$ and $|S_i| \geq |S_{i-1}|-1$ . Thus, $\Delta_{R',Q'} = \Omega(|Q'|^2)$ and hence the claim follows. See Figure~\ref{fig:lowerbound}(a).

\begin{figure}
    \centering
    \includegraphics[scale=0.42]{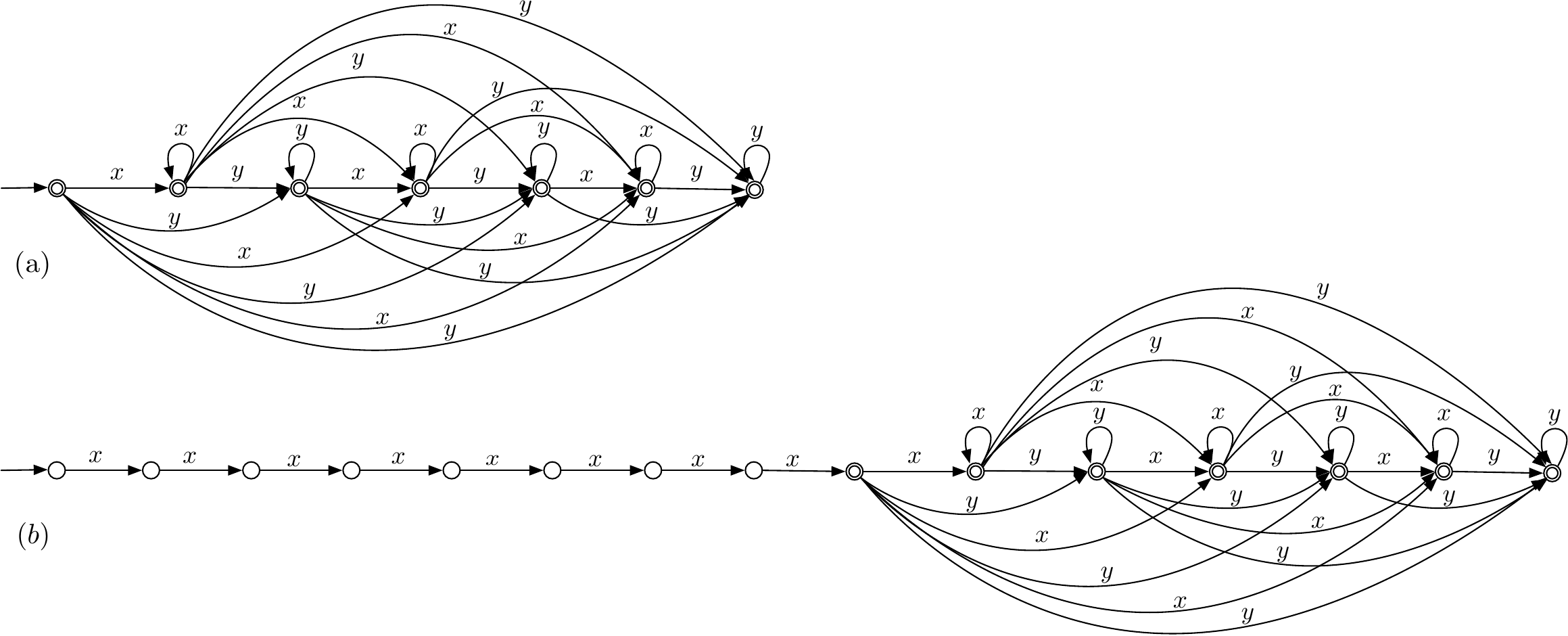}
    \caption{(a) The position automaton for $x^\ast y^\ast x^\ast y^\ast x^\ast y^\ast$. (b)  The position automaton for $x^8 x^\ast y^\ast x^\ast y^\ast x^\ast y^\ast$.}
    \label{fig:lowerbound}
\end{figure}

We can now prove the following theorem.
\begin{theorem}\label{thm:lbreduction}
    Given $A = \{a^1, \ldots, a^N\} \subseteq \{0,1\}^d$ and $B = \{b^1, \ldots, b^M\} \subseteq \{0,1\}^d $ and a constant $\gamma$, where $0 < \gamma \leq 1$, we can construct in  
    $O((Nd)^{2/(1+\gamma) })$
    time a regular expression $R$ and a string $Q$, such that there exists  $a \in A$ and $b\in B$ where $a\cdot b = 0$ if and only if $Q\in L(R)$. The size of $R$ and $Q$ is $\Theta((Nd)^{2/(1+\gamma) })$ and $\Delta_{R,Q} = \Theta(|Q|^{1+\gamma}) = \Theta(N^2d^2)$.
\end{theorem}

\begin{proof}
    We construct our instance $R$, $Q$ from $R'$ and $Q'$ as follows. 
    $$Q = x^{\ell}\cdot Q' \qquad \textrm{and} \qquad R = x^{\ell}\cdot R'$$
    where $\ell = (Nd)^{2/(1+\gamma)}$. 

Clearly, $Q$ matches $R$ if and only if $Q'$ matches $R'$. Furthermore, the NFA simulation on the first $\ell$ characters must produce singleton state sets  (see Figure~\ref{fig:lowerbound}(b)). Hence, we have that $\Delta_{R,Q} = \ell + \Delta_{R',Q'} = \Theta((Nd)^2)$. Since $|Q| = \ell + |Q'| = \Theta((Nd)^{2/(1+\gamma)})$ we have that $\Delta = \Theta(|Q|^{1+\gamma)})$.
\end{proof}

Theorem~\ref{thm:lowerbound} follows directly from Theorem~\ref{thm:lbreduction}, since an $O(\Delta^{1-\epsilon}) = O((Nd)^{2-2\epsilon})$ time algorithm for regular expression matching would imply a $O((Nd)^{2-2\epsilon}+ (Nd)^{2/(1+\gamma)})$ algorithm for OVP.


\section{Acknowledgments}
We thank the anonymous reviewers whose comments and suggestions signiﬁcantly improved the presentation of the paper.

\bibliography{paper}

\end{document}